\documentclass{llncs}

\usepackage{amsmath,amssymb}
\usepackage[utf8]{inputenc}
\usepackage[T1]{fontenc}
\usepackage{mathptmx}
\DeclareMathAlphabet{\mathcal}{OMS}{cmsy}{m}{n}
\usepackage{graphicx}
\usepackage{paralist}
\usepackage{xspace}
\usepackage[english]{babel}
\usepackage{hyperref}

\usepackage{braceMnSymbol}

\usepackage[backgroundcolor=white,linecolor=black,textsize=scriptsize]{todonotes}
\setlength{\marginparwidth}{2.2cm}

\spnewtheorem{fact}{Fact}{\bfseries}{\itshape} %
\spnewtheorem{nrclaim}{Claim}{\itshape}{\rmfamily}

\let\doendproof\endproof
\renewcommand\endproof{~\hfill$\qed$\doendproof}

\newcommand{\low}{\ensuremath{\mathrm{low}}}
\newcommand{\eps}{\varepsilon}

\DeclareMathOperator{\skel}{skel} %
\DeclareMathOperator{\pert}{pert} %
\DeclareMathOperator{\flex}{flex} %
\DeclareMathOperator{\rot}{rot} %
\DeclareMathOperator{\cost}{cost} %
\DeclareMathOperator{\flow}{flow} %

\makeatletter
\DeclareRobustCommand{\bfseries}{%
  \not@math@alphabet\bfseries\mathbf
  \fontseries\bfdefault\selectfont
  \boldmath
}
\makeatother

\title{Orthogonal Graph Drawing with Inflexible Edges}%
\author{Thomas Bläsius, Sebastian Lehmann, Ignaz Rutter}%
\institute{Faculty of Informatics, Karlsruhe Institute of Technology
  (KIT), Germany}

\begin{document}

\pagestyle{headings}

\maketitle

\begin{abstract}

  We consider the problem of creating plane orthogonal drawings of
  \emph{4-planar graphs} (planar graphs with maximum degree~4) with
  constraints on the number of bends per edge.  More precisely, we
  have a \emph{flexibility function} assigning to each edge $e$ a
  natural number $\flex(e)$, its \emph{flexibility}.  The problem
  \textsc{FlexDraw} asks whether there exists an orthogonal drawing
  such that each edge $e$ has at most $\flex(e)$ bends.  It is known
  that \textsc{FlexDraw} is NP-hard if $\flex(e) = 0$ for every edge
  $e$~\cite{gt-curpt-01}.  On the other hand, \textsc{FlexDraw} can be
  solved efficiently if $\flex(e) \ge 1$~\cite{bkrw-ogdfc-12} and is
  trivial if $\flex(e) \ge 2$~\cite{bk-bhogd-98} for every edge $e$.

  To close the gap between the NP-hardness for $\flex(e) = 0$ and the
  efficient algorithm for $\flex(e) \ge 1$, we investigate the
  computational complexity of \textsc{FlexDraw} in case only few edges
  are \emph{inflexible} (i.e., have flexibility~$0$).  We show that
  for any $\eps > 0$ {\sc FlexDraw} is NP-complete for instances with
  $O(n^\eps)$ inflexible edges with pairwise distance
  $\Omega(n^{1-\eps})$ (including the case where they induce a
  matching).  On the other hand, we give an FPT-algorithm with running
  time $O(2^k\cdot n \cdot T_{\flow}(n))$, where $T_{\flow}(n)$ is the
  time necessary to compute a maximum flow in a planar flow network
  with multiple sources and sinks, and $k$ is the number of inflexible
  edges having at least one endpoint of degree~4.
\end{abstract}

\section{Introduction}
\label{sec:introduction}

Bend minimization in orthogonal drawings is a classical problem in the
field of graph drawing.  We consider the following problem called {\sc
  OptimalFlexDraw}.  The input is a 4-planar graph $G$ (from now on
all graphs are 4-planar) together with a cost function $\cost_e\colon
\mathbb N \to \mathbb R\cup \{\infty\}$ assigned to each edge.  We
want to find an orthogonal drawing $\Gamma$ of $G$ such that
$\sum\cost_e(\beta_e)$ is minimal, where $\beta_e$ is the number of
bends of $e$ in $\Gamma$.  The basic underlying decision problem {\sc
  FlexDraw} restricts the cost function of every edge $e$ to
$\cost_e(\beta) = 0$ for $\beta \in [0, \flex(e)]$ and $\cost_e(\beta)
= \infty$ otherwise, and asks whether there exists a \emph{valid}
drawing (i.e., a drawing with finite cost).  The value $\flex(e)$ is
called the \emph{flexibility} of $e$.  Edges with flexibility~0 are
called \emph{inflexible}.

Note that \textsc{FlexDraw} represents the important base case of
testing for the existence of a drawing with cost~0 that is included in
solving \textsc{OptimalFlexDraw}.

Garg and Tamassia~\cite{gt-curpt-01} show that {\sc FlexDraw} is
NP-hard in this generality, by showing that it is NP-hard if every
edge is inflexible.  For special cases, namely planar graphs with
maximum degree~3 and series-parallel graphs, Di~Battista et
al.~\cite{blv-sood-98} give an algorithm minimizing the total number
of bends, which solves {\sc OptimalFlexDraw} with $\cost_e(\beta) =
\beta$ for each edge $e$.  Their approach can be used to solve {\sc
  FlexDraw}, as edges with higher flexibility can be modeled by a path
of inflexible edges. Biedl and Kant~\cite{bk-bhogd-98} show that every
4-planar graph (except for the octahedron) admits an orthogonal
drawing with at most two bends per edge.  Thus, {\sc FlexDraw} is
trivial if the flexibility of every edge is at least~2.  Bläsius et
al.~\cite{bkrw-ogdfc-12,brw-oogd-13} tackle the NP-hard problems {\sc
  FlexDraw} and {\sc OptimalFlexDraw} by not counting the first bend
on every edge.  They give a polynomial time algorithm solving {\sc
  FlexDraw} if the flexibility of every edge is at
least~1~\cite{bkrw-ogdfc-12}.  Moreover, they show how to efficiently
solve {\sc OptimalFlexDraw} if the cost function of every edge is
convex and allows the first bend for free~\cite{brw-oogd-13}.

When restricting the allowed drawings to those with a specific planar
embedding, the problem {\sc OptimalFlexDraw} becomes significantly
easier.  Tamassia~\cite{t-eggmb-87} shows how to find a drawing with
as few bends as possible by computing a flow in a planar flow network.
This flow network directly extends to a solution of {\sc
  OptimalFlexDraw} with fixed planar embedding, if all cost functions
are convex.  Cornelsen and Karrenbauer~\cite{ck-abm-12} recently
showed, that this kind of flow network can be solved in $O(n^{3/2})$
time.

\paragraph{Contribution \& Outline.}

In this work we consider {\sc OptimalFlexDraw} for instances that may
contain inflexible edges, closing the gap between the general
NP-hardness result~\cite{gt-curpt-01} and the polynomial-time
algorithms in the absence of inflexible
edges~\cite{bkrw-ogdfc-12,brw-oogd-13}.  After presenting some
preliminaries in Section~\ref{sec:preliminaries}, we show in
Section~\ref{sec:match-infl-edges} that {\sc FlexDraw} remains NP-hard
even for instances with only $O(n^\eps)$ (for any $\eps > 0$)
inflexible edges that are distributed evenly over the graph, i.e.,
they have pairwise distance $\Omega(n^{1-\eps})$.  This includes the
cases where the inflexible edges are restricted to form very simple
structures such as a matching.

On the positive side, we describe a general algorithm that can be used
to solve {\sc OptimalFlexDraw} by solving smaller subproblems
(Section~\ref{sec:general-algorithm}).  This provides a framework for
the unified description of bend minimization algorithms which covers
both, previous work and results presented in this paper.  We use this
framework in Section~\ref{sec:seri-parall-graphs} to solve {\sc
  OptimalFlexDraw} for series-parallel graphs with monotone cost
functions.  This extends the algorithm of Di~Battista et
al.~\cite{blv-sood-98} by allowing a significantly larger set of cost
functions (in particular, the cost functions may be non-convex).  In
Section~\ref{sec:fpt-algorithm}, we present our main result, which is
an FPT-algorithm with running time~$O(2^k \cdot n \cdot
T_{\flow}(n))$, where $k$ is the number of inflexible edges incident
to degree-4 vertices, and $T_{\flow}(n)$ is the time necessary to
compute a maximum flow in a planar flow network of size $n$ with
multiple sources and sinks.  Note that we can require an arbitrary
number of edges whose endpoints both have degree at most~3 to be
inflexible without increasing the running time.

% We conclude with open questions in Section~\ref{sec:conclusion}.

\section{Preliminaries}
\label{sec:preliminaries}

\subsection{Connectivity \& the Composition of Graphs}
\label{sec:composition-graphs}

A graph $G$ is \emph{connected} if there exists a path between every
pair of vertices.  A \emph{separating $k$-set} $S$ is a subset of
vertices of $G$ such that $G - S$ is not connected.  Separating 1-sets
are called \emph{cutvertices} and separating 2-sets \emph{separation
  pairs}.  A connected graph without cutvertices is \emph{biconnected}
and a biconnected graph without separation pairs is
\emph{triconnected}.  The \emph{blocks} of a connected graph are its
maximal (with respect to inclusion) biconnected subgraphs.

An \emph{$st$-graph} $G$ is a graph with two designated vertices $s$
and $t$ such that $G + st$ is biconnected and planar.  The vertices
$s$ and $t$ are called the \emph{poles} of $G$.  Let $G_1$ and $G_2$
be two $st$-graphs with poles $s_1$, $t_1$ and $s_2$, $t_2$,
respectively.  The \emph{series composition} $G$ of $G_1$ and $G_2$ is
the union of $G_1$ and $G_2$ where $t_1$ is identified with $s_2$.
Clearly, $G$ is again an $st$-graph with the poles $s_1$ and $t_2$.
In the \emph{parallel composition} $G$ of $G_1$ and $G_2$ the vertices
$s_1$ and $s_2$ and the vertices $t_1$ and $t_2$ are identified with
each other and form the poles of $G$.  An $st$-graph is
\emph{series-parallel}, if it is a single edge or the series or
parallel composition of two series-parallel graphs.

To be able to compose all $st$-graphs, we need a third composition.
Let $G_1, \dots, G_\ell$ be a set of $st$-graphs with poles $s_i$ and
$t_i$ associated with $G_i$.  Moreover, let $H$ be an $st$-graph with
poles $s$ and $t$ such that $H + st$ is triconnected and let $e_1,
\dots, e_\ell$ be the edges of $H$.  Then the \emph{rigid composition}
$G$ with respect to the so-called \emph{skeleton} $H$ is obtained by
replacing each edge $e_i$ of $H$ by the graph $G_i$, identifying the
endpoints of $e_i$ with the poles of $G_i$.  It follows from the
theory of SPQR-trees that every $st$-graph is either a single edge or
the series, parallel or rigid composition of
$st$-graphs~\cite{dt-omtc-96,dt-opt-96}.

\subsection{SPQR-Tree}
\label{sec:spqr-tree}

The SPQR-tree $\mathcal T$ of a biconnected $st$-graph $G$ containing
the edge $st$ is a rooted tree encoding series, parallel and rigid
compositions of $st$-graphs that result in the graph
$G$~\cite{dt-omtc-96,dt-opt-96}.  The leaves of $\mathcal T$ are
\emph{Q-nodes} representing the edges of $G$ and thus the $st$-graphs
we start with.  The root of $\mathcal T$ is also a Q-node,
representing the special edge $st$.  Each inner node is either an
\emph{S-node}, representing one or more series compositions of its
children, a \emph{P-node}, representing one or more parallel
compositions of its children, or an \emph{R-node}, representing a
rigid composition of its children.

Recall that the rigid composition is performed with respect to a
skeleton.  For an R-node $\mu$, let $H$ be the skeleton of the
corresponding rigid composition with poles $s_\mu$ and $t_\mu$.  We
call $H + s_{\mu}t_\mu$ the \emph{skeleton} of the $\mu$ and denote it
by $\skel(\mu)$.  The special edge $s_{\mu}t_\mu$ is called
\emph{parent edge}, all other edges are virtual edges, each
corresponding to one child of $\mu$.  We also add skeletons to the
other nodes.  For an S-node $\mu$, the skeleton $\skel(\mu)$ is a path
of virtual edges (one for each child) from $s_\mu$ to~$t_\mu$ together
with the parent edge $s_{\mu}t_\mu$.  The skeleton of a P-node $\mu$
is a bunch of parallel virtual edges (one for each child) between
$s_\mu$ and $t_\mu$ together with the parent edge $s_{\mu}t_\mu$.  The
skeleton of a Q-node contains the edge it represents in $G$ together
with a parallel parent edge.  The root representing $st$ has no parent
edge, thus this additional edge is a virtual edge corresponding to the
unique child of the root.

When not allowing pairs of adjacent S-nodes and pairs of adjacent
P-nodes in $\mathcal T$, then the SPQR-tree is unique for a fixed edge
$st$ in $G$.  Moreover, using the endpoints of a different edge as
poles of $G$ results in the same SPQR-tree with a different root (the
parent edge in each skeleton may also change).  For fixed poles $s$
and $t$, there is a bijection between the planar embeddings of $G$
with $st$ on the outer face and the combinations of embeddings of all
skeletons with their parent edges on the outer face.  The
\emph{pertinent graph} $\pert(\mu)$ of a node $\mu$ of $\mathcal T$ is
recursively defined to be the skeleton $\skel(\mu)$ without the parent
edge $s_{\mu}t_\mu$ after the replacement of every virtual edge with
the pertinent graph of the corresponding child.  Note that the
pertinent graph of the root is $G$ itself. The SPQR-tree can be
computed in linear time~\cite{gm-lti-00}.

\subsection{Orthogonal Representation}
\label{sec:orth-repr}

To handle orthogonal drawings of a graph $G$, we use the abstract
concept of orthogonal representations neglecting distances in a
drawing.  Orthogonal representations were introduced by
Tamassia~\cite{t-eggmb-87}, however, we use a slight modification that
makes it easier to work with, as bends of edges and bends at vertices
are handled the same.  Let $\Gamma$ be a \emph{normalized} orthogonal
drawing of $G$, that is every edge has only bends in one direction.
If additional bends cannot improve the drawing (i.e., costs are
monotonically increasing), a normalized optimal drawing
exists~\cite{t-eggmb-87}.  We assume that all orthogonal drawings we
consider are normalized.

We assume that $G$ is biconnected.  This simplifies the description,
as each edge and vertex has at most one incidence to a face.  For
connected graphs, referring to the incidence of a vertex or an edge
and a face may be ambiguous.  However, it will be always clear from
the context, which incidence is meant.

Let $e$ be an edge in $G$ that has $\beta$ bends in $\Gamma$ and let
$f$ be a face incident to $e$.  We define the \emph{rotation} of $e$
in $f$ to be $\rot(e_f) = \beta$ and $\rot(e_f) = -\beta$ if the bends
of $e$ form $90^\circ$ and $270^\circ$ angles in $f$, respectively.
For a vertex $v$ forming the angle $\alpha$ in the face $f$, we define
$\rot(v_f) = 2 - \alpha/90^\circ$.  % incident
% to a face $f$, we define $\rot(v_f)$ to be $1$, $0$, $-1$ and $-2$ if
% $v$ forms an angle of $90^\circ$, $180^\circ$, $270^\circ$ and
% $360^\circ$ in $f$, respectively. 
Note that, when traversing a face of $G$ in clockwise
(counter-clockwise for the outer face) direction, the right and left
bends correspond to rotations of $1$ and $-1$, respectively (we may
have two left bends at once at vertices of degree~1).  The values for
the rotations we obtain from a drawing $\Gamma$ satisfy the following
properties; see Fig.~\ref{fig:orthogonal-representation}a.
\begin{compactenum}[(1)]
\item \label{prop:1}The sum over all rotations in a face is $4$ ($-4$
  for the outer face).
\item \label{prop:2}For every edge $e$ with incident faces $f_\ell$
  and $f_r$ we have $\rot(e_{f_\ell}) + \rot(e_{f_r}) = 0$.
\item \label{prop:3}The sum of rotations around a vertex $v$ is
  $2\cdot \deg(v) - 4$.
\item \label{prop:4}The rotations at vertices lie in the range $[-2,
  1]$.
\end{compactenum}
Let $\mathcal R$ be a structure consisting of an embedding of $G$ plus
a set of values fixing the rotation for every vertex-face and
edge-face incidence.  We call $\mathcal R$ an \emph{orthogonal
  representation} of $G$ if the rotation values satisfy the above
properties~\eqref{prop:1}--\eqref{prop:4}.  Given an orthogonal
representation $\mathcal R$, a drawing inducing the specified
rotation values exists and can be computed
efficiently~\cite{t-eggmb-87}.

\paragraph{Orthogonal Representations and Bends of $st$-Graphs.}

We extend the notion of rotation to paths, ; conceptually this is very
similar to spirality~\cite{blv-sood-98}.  Let $\pi$ be a path from
vertex $u$ to vertex $v$.  We define the rotation of $\pi$ (denoted by
$\rot(\pi)$) to be the number of bends to the right minus the number
of bends to the left when traversing $\pi$ from $u$ to $v$.

There are two special paths in an $st$-graph $G$.  Let $s$ and $t$ be
the poles of $G$ and let $\mathcal R$ be an orthogonal representation
with $s$ and $t$ on the outer face.  Then $\pi(s, t)$ denotes the path
from $s$ to $t$ when traversing the outer face of $G$ in
counter-clockwise direction.  Similarly, $\pi(t, s)$ is the path from
$t$ to $s$.  We define the \emph{number of bends} of $\mathcal R$ to
be $\max\{|\rot(\pi(s, t))|, |\rot(\pi(t, s))|\}$.  Note that a single
edge $e = st$ is also an $st$-graph.  Note further that the notions of
the number of bends of the edge $e$ and the number of bends of the
$st$-graph $e$ coincide.  Thus, the above definition is consistent.

When considering orthogonal representations of $st$-graphs, we always
require the poles $s$ and $t$ to be on the outer face.  We say that
the vertex $s$ has \emph{$\sigma$ occupied incidences} if $\rot(s_f) =
\sigma - 3$ where $f$ is the outer face.  We also say that $s$ has $4
- \sigma$ \emph{free incidences} in the outer face.  If the poles $s$
and $t$ have $\sigma$ and $\tau$ occupied incidences in $\mathcal R$,
respectively, we say that $\mathcal R$ is a \emph{$(\sigma,
  \tau)$-orthogonal representation}; see
Fig.~\ref{fig:orthogonal-representation}b.

Note that $\rot(\pi(s, t))$ and $\rot(\pi(t, s))$ together with the
number of occupied incidences $\sigma$ and $\tau$ basically describe
the outer shape of $G$ and thus how it has to be treated if it is a
subgraph of some larger graph.  Using the bends of $\mathcal R$
instead of the rotations of $\pi(s, t)$ and $\pi(t, s)$ implicitly
allows to mirror the orthogonal representation (and thus exchanging
$\pi(s, t)$ and $\pi(t, s)$).

\begin{figure}[tb]
  \centering
  \includegraphics[page=1]{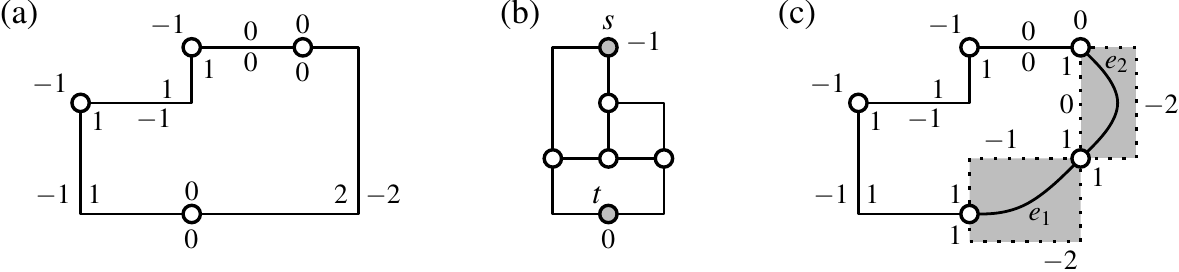}
  \caption{(a)~An orthogonal drawing together with its orthogonal
    representation given by the rotation values.  (b)~A $(2,
    3)$-orthogonal representation ($s$ and $t$ have $2$ and $1$ free
    incidences, respectively).  (c)~An orthogonal representation with
    thick edges $e_1$ and $e_2$.  The gray boxes indicate how many
    attachments the thick edges occupy, i.e., $e_1$ is a $(2, 3)$-edge
    and $e_2$ is a $(2, 2)$-edge.  Both thick edges have two bends.}
  \label{fig:orthogonal-representation}
\end{figure}

\paragraph{Thick Edges.}

In the basic formulation of an orthogonal representation, every edge
\emph{occupies} exactly one incidence at each of its endpoints, that
is an edge enters each of its endpoint from exactly one of four
possible directions.  We introduce \emph{thick edges} that may occupy
more than one incidence at each endpoint to represent larger
subgraphs.

Let $e = st$ be an edge in $G$.  We say that $e$ is a \emph{$(\sigma,
  \tau)$-edge} if $e$ is defined to occupy $\sigma$ and $\tau$
incidences at $s$ and $t$, respectively.  Note that the total amount
of occupied incidences of a vertex in $G$ must not exceed~4.  With
this extended notion of edges, we define a structure $\mathcal R$
consisting of an embedding of $G$ plus a set of values for all
rotations to be an \emph{orthogonal representation} if it satisfies
the following (slightly extended) properties; see
Fig.~\ref{fig:orthogonal-representation}c.
\begin{compactenum}[(1)]
\item \label{prop:1-extended}The sum over all rotations in a face is
  $4$ ($-4$ for the outer face).
\item \label{prop:2-extended}For every $(\sigma, \tau)$-edge $e$ with
  incident faces $f_\ell$ and $f_r$ we have $\rot(e_{f_\ell}) +
  \rot(e_{f_r}) = 2 - (\sigma + \tau)$.
\item \label{prop:3-extended}The sum of rotations around a vertex $v$
  with incident edges $e_1, \dots, e_\ell$ occupying $\sigma_1, \dots,
  \sigma_\ell$ incidences of $v$ is $\sum (\sigma_i + 1) - 4$
\item \label{prop:4-extended}The rotations at vertices lie in the
  range $[-2, 1]$.
\end{compactenum}
Note that requiring every edge to be a $(1, 1)$-edge in this
definition of an orthogonal representation exactly yields the previous
definition without thick edges.  The \emph{number of bends} of a
(thick) edge $e$ incident to the faces $f_\ell$ and $f_r$ is
$\max\{|\rot(e_{f_\ell})|, |\rot(e_{f_r})|\}$.  Unsurprisingly,
replacing a $(\sigma, \tau)$-edge with $\beta$ bends in an orthogonal
representation by a $(\sigma, \tau)$-orthogonal representation with
$\beta$ bends of an arbitrary $st$-graph yields a valid orthogonal
representation~\cite[Lemma~5]{bkrw-ogdfc-12}.

\section{A Matching of Inflexible Edges}
\label{sec:match-infl-edges}

In this section, we show that {\sc FlexDraw} is NP-complete even if
the inflexible edges form a matching.  In fact, we show the stronger
result of NP-hardness of instances with $O(n^\eps)$ inflexible edges
(for $\eps > 0$) even if these edges are distributed evenly over the
graph, that is they have pairwise distance~$\Omega(n^{1-\eps})$.  This
for example shows NP-hardness for instances with $O(\sqrt{n})$
inflexible edges with pairwise distances of $\Omega(\sqrt{n})$.

We adapt the proof of NP-hardness by Garg and
Tamassia~\cite{gt-curpt-01} for the case that all edges of an instance
of {\sc FlexDraw} are inflexible.  For a given instance of
\textsc{Nae-3Sat} (Not All Equal 3SAT) they show how to construct a
graph $G$ that admits an orthogonal representation without bends if
and only if the instance of \textsc{Nae-3Sat} is satisfiable.  The
graph $G$ is obtained by first constructing a graph $F$ that has a
unique planar embedding~\cite[Lemma~5.1]{gt-curpt-01} and replacing
the edges of $F$ by special $st$-graphs, the so called tendrils and
wiggles.  Both, tendrils and wiggles, have degree~1 at both poles and
a unique planar embedding up to possibly a flip.  It follows for each
vertex $v$ of $G$, that the cyclic order of incident edges around $v$
is fixed up to a flip.  This implies the following lemma.

\begin{lemma}[Garg \& Tamassia~\cite{gt-curpt-01}]
  \label{lem:np-hardness-fixed-ordering}
  {\sc FlexDraw} is NP-hard, even if the order of edges around each
  vertex is fixed up to reversal.
\end{lemma}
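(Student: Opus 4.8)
The plan is to revisit the reduction of Garg and Tamassia~\cite{gt-curpt-01} from \textsc{Nae-3Sat} and to argue that every instance it produces already has the claimed property, so that no new reduction is required. First I would recall the structure of the graph $G$ they build: one starts from a graph $F$ that, by~\cite[Lemma~5.1]{gt-curpt-01}, has a unique planar embedding, and then each edge of $F$ is substituted by one of two small $st$-graphs, a \emph{tendril} or a \emph{wiggle}, depending on the role it plays in encoding variables and clauses. The two facts I would extract from~\cite{gt-curpt-01} and use as a black box are that (i) each of these gadgets has degree~$1$ at both of its poles, and (ii) each of them has a planar embedding that is unique up to a flip (reflection).

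From these two facts the rotation system of $G$ is essentially forced, and I would argue this vertex by vertex. A vertex $v$ of $G$ is either an internal vertex of some tendril or wiggle, or a vertex of $F$. In the first case the cyclic order of edges around $v$ is determined up to reversal because the gadget containing $v$ has a unique embedding up to flip. In the second case, since every gadget attaches to $v$ through a single edge (degree~$1$ at its pole), the cyclic order of the edges of $G$ around $v$ is obtained from the cyclic order of the edges of $F$ around $v$ by a mere renaming; as $F$ has a unique embedding, this order is fixed, hence in particular fixed up to reversal. The remaining ``up to reversal'' freedom in the statement is exactly what a global reflection of the whole drawing, together with independent flips of the individual gadgets (which only reverse the local order at their internal vertices), can realize.

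Combining the pieces: Garg and Tamassia prove that deciding whether $G$ admits a bend-free orthogonal drawing is NP-hard, that is, \textsc{FlexDraw} is NP-hard already when every edge is inflexible; and we have just observed that in all of their instances the cyclic order of edges around every vertex is fixed up to reversal. Hence \textsc{FlexDraw} restricted to such instances remains NP-hard, which is the claim of the lemma.

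I expect the main obstacle to be the bookkeeping in the second step: one has to be sure that gluing the tendrils and wiggles into $F$ introduces no additional embedding freedom at the vertices of $F$ (it does not, precisely because the attachment degree is~$1$ and the embedding of $F$ is rigid), and that the internal structure of a tendril or a wiggle genuinely admits no embedding choice beyond a flip. Both points are implicit in~\cite{gt-curpt-01}; the only real work here is to make the ``fixed up to reversal'' consequence explicit, together with a routine check that the maximum-degree-$4$ restriction is respected throughout the gadget construction.
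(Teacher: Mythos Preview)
Your proposal is correct and follows essentially the same approach as the paper: the paper's justification (given in the paragraph immediately preceding the lemma rather than as a formal proof) is precisely your argument that $F$ has a unique embedding, that tendrils and wiggles have degree~$1$ at their poles and are unique up to a flip, and that therefore the cyclic order around every vertex of $G$ is fixed up to reversal. Your vertex-by-vertex case distinction merely spells out what the paper compresses into a single sentence.
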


We assume that our instances do not contain degree-2 vertices; their
incident edges can be replaced by a single edge with higher
flexibility.  In the following, we first show how to replace vertices
of degree~3 by graphs of constant size such that each inflexible edge
is incident to two vertices of degree~4.  Afterwards, we can replace
degree-4 vertices by smaller subgraphs with positive flexibility,
which increases the distance between the inflexible edges.  We start
with the description of an $st$-graph that has either~1 or~2 bends in
every valid orthogonal representation.

The wheel $W_4$ of size~4 consists of a 4-cycles $v_1, \dots, v_4$
together with a center~$u$ connected to each of the vertices $v_1,
\dots, v_4$; see Figure~\ref{fig:hardness-matching}a.  We add the
two vertices $s$ and $t$ together with the inflexible edges $sv_1$ and
$tv_2$ to $W_4$.  Moreover, we set the flexibility of $v_3v_4$
to~2 and the flexibilities of all other edges to~1.  We call the
resulting $st$-graph \emph{bend gadget} and denote it by $B_{1,2}$.
We only consider embeddings of $B_{1, 2}$ where all vertices except
for $u$ lie on the outer face.  Figure~\ref{fig:hardness-matching}(b)
shows two valid orthogonal representations of $B_{1, 2}$, one with~1,
the other with~2 bends.  Clearly, the number of bends cannot be
reduced to~0 (or increased above~2) without violating the flexibility
constraints of edges on the path $\pi(s, t)$ (or on the path $\pi(t,
s)$).  Thus, $B_{1, 2}$ has either~1 or~2 bends in every orthogonal
representation.  Moreover, if its embedding is fixed, then the
direction of the bends is also fixed.

\begin{figure}[tb]
  \centering
  \includegraphics[page=4]{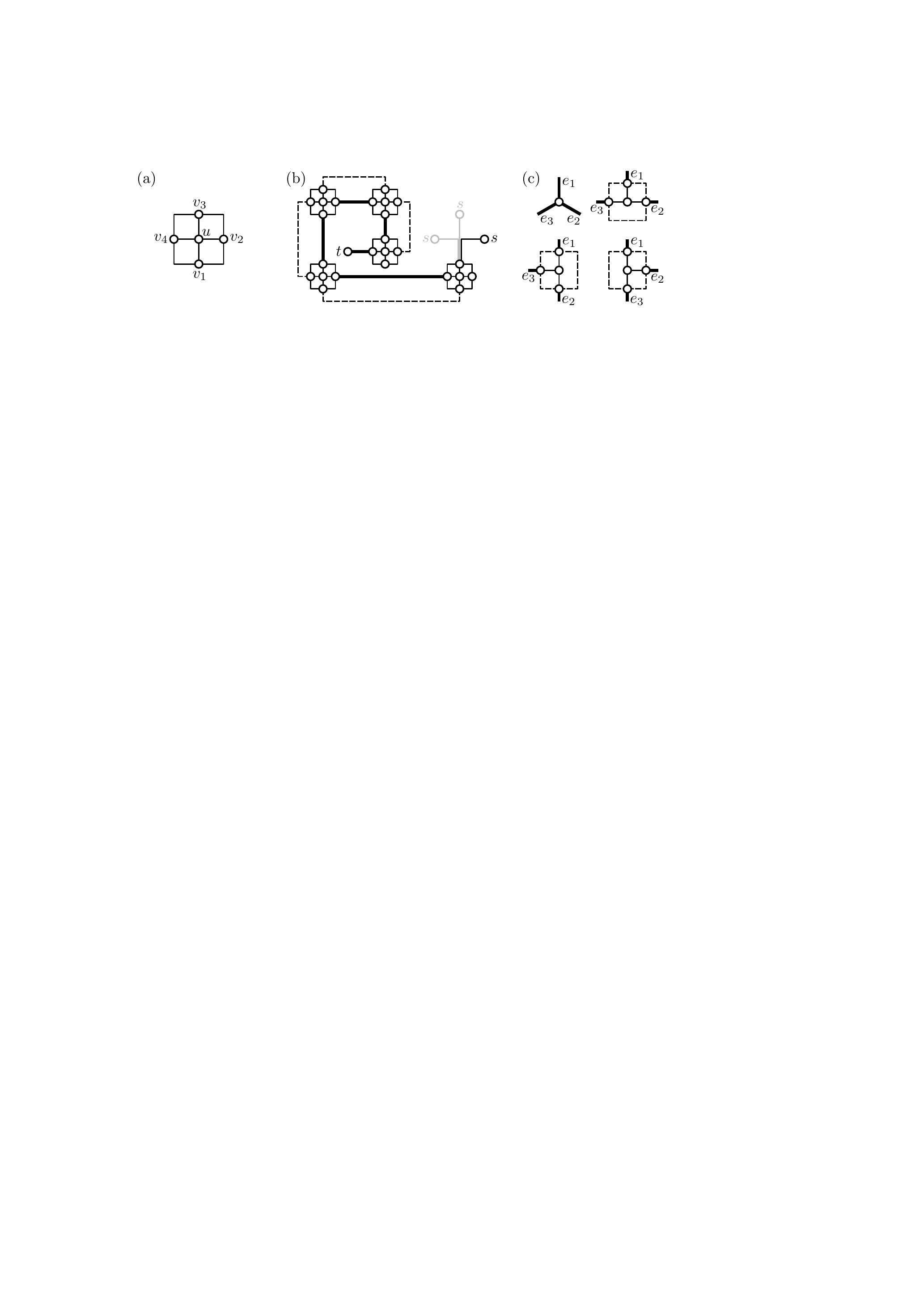}
  \caption{The bold edges are inflexible; dashed edges have
    flexibility~2; all other edges have flexibility~1.  (a)~The
    wheel~$W_4$.  (b)~The bend gadget $B_{1, 2}$. (c)~The gadget
    $W_3'$ for replacing degree-3 vertices.  The marked subgraphs are
    bend gadgets.}
  \label{fig:hardness-matching}
\end{figure}

We now use the bend gadget as building block for a larger gadget.  We
start with the wheel $W_3$ of size~3 consisting of a triangle $v_1,
v_2, v_3$ together with a center $u$ connected to $v_1$, $v_2$, and
$v_3$.  The flexibilities of the edges incident to the center are set
to~1, each edge in the triangle is replaced by a bend gadget $B_{1,
  2}$.  To fix the embedding of the bend gadgets, we add three
vertices $v_1'$, $v_2'$, and $v_3'$ connected with inflexible edges to
$v_1$, $v_2$, and $v_3$, respectively, and connect them to the free
incidences in the bend gadgets, as shown in
Figure~\ref{fig:hardness-matching}(c).  We denote the resulting graph
by $W_3'$.  Clearly, in the cycle of bend gadgets, two of them have
one bend and the other has two bends in every valid orthogonal
representation of $W_3'$.  Thus, replacing a vertex $v$ with incident
edges $e_1$, $e_2$, and $e_3$ by $W_3'$, attaching the edge $e_i$ to
$v_i'$, yields an equivalent instance of \textsc{FlexDraw}.  Note that
such a replacement increases the degree of one incidence of $e_1$,
$e_2$, and $e_3$ form~3 to~4.  Moreover, every inflexible edge
contained in $W_3'$ is incident to two vertices of degree~4.  We
obtain the following lemma.

\begin{lemma}
  \label{lem:np-hardness-fixde-ordering-deg-4}
  {\sc FlexDraw} is NP-hard, even if the endpoints of each inflexible
  edge have degree~4 and if the order of edges around each vertex is
  fixed up to reversal.
\end{lemma}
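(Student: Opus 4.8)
The plan is to combine Lemma~\ref{lem:np-hardness-fixed-ordering} with the two gadget constructions just described, and then add one more replacement step that trades degree-3 vertices and high-degree incidences for a graph in which every inflexible edge is incident to two degree-4 vertices while the rotational freedom at each vertex is still fixed up to reversal. First I would start from an instance of {\sc FlexDraw} with the order of edges around each vertex fixed up to reversal, as provided by Lemma~\ref{lem:np-hardness-fixed-ordering}; by the remark preceding that lemma we may assume there are no degree-2 vertices. The only vertices whose degree is not already~4 are then degree-3 vertices (and degree-1 vertices, which occur inside the tendril/wiggle construction, but those can be handled trivially or absorbed), so the substantive step is to eliminate degree-3 vertices.

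Next I would invoke the $W_3'$ gadget: replace every degree-3 vertex $v$ with incident edges $e_1, e_2, e_3$ by a copy of $W_3'$, attaching $e_i$ to $v_i'$. I need to argue this substitution preserves both the existence of a valid orthogonal representation and the ``fixed up to reversal'' property. For equivalence, the key observation (already sketched in the text) is that in the cyclic ring of three bend gadgets $B_{1,2}$, each gadget contributes either~1 or~2 bends and the three rotations must sum to a value consistent with closing the ring around the center $u$; this forces exactly two of the three gadgets to have one bend and the third to have two, so the gadget behaves, from the outside, exactly like a degree-3 vertex with three free incidences in arbitrary rotational position — i.e.\ it imposes no constraint beyond what a plain degree-3 vertex would. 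For the embedding/ordering property, note that $B_{1,2}$ has a unique planar embedding up to a flip (because of the inflexible edges $sv_1$, $tv_2$ pinning the wheel), and the inflexible edges $v_iv_i'$ together with their attachment to the free incidences of the bend gadgets force the cyclic order of $e_1, e_2, e_3$ around the (now degree-4) images of $v_1, v_2, v_3$ and of the interior edges to be fixed up to a global reversal of the whole $W_3'$; so the property propagates. After this replacement, one incidence of each of $e_1, e_2, e_3$ has had its degree raised from~3 to~4, and — crucially — every inflexible edge appearing inside $W_3'$ (the edges $sv_1$, $tv_2$ inside each bend gadget, and the edges $v_iv_i'$) is incident to two degree-4 vertices.

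There is a subtlety to address: after replacing all degree-3 vertices, could an endpoint of one of the \emph{original} inflexible edges of the Garg--Tamassia instance still have degree~3? In their construction the inflexible edges are the tendrils/wiggles, whose poles have degree~1, not~3, and $F$ itself has degree at most~3; so the original inflexible edges are incident to degree-1 vertices, and after subdividing/expanding appropriately each such edge can be routed so that both endpoints land on degree-4 images. I would make this explicit: before the $W_3'$ substitution, attach to each degree-$\le 3$ endpoint of an original inflexible edge enough auxiliary flexible-edge structure (or simply observe the endpoint is internal to a gadget of the previous step) to guarantee its image has degree~4. The main obstacle I expect is exactly this bookkeeping — verifying that \emph{every} inflexible edge in the final instance (both the ones internal to the new gadgets and the ones carried over from the source instance) has both endpoints of degree~4, while simultaneously checking that none of the substitutions accidentally introduces a new rotational degree of freedom that would break the ``fixed up to reversal'' hypothesis needed later. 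Once these two invariants are confirmed, the reduction is complete and polynomial in size, yielding the stated NP-hardness.
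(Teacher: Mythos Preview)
Your core argument is the same as the paper's: start from Lemma~\ref{lem:np-hardness-fixed-ordering}, eliminate degree-2 vertices, and replace degree-3 vertices by copies of $W_3'$ so that every inflexible edge ends up with two degree-4 endpoints. The paper is simply more terse; it only replaces the degree-3 vertices that are actually incident to an inflexible edge, and it does not belabor the embedding invariant.

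Your final paragraph, however, rests on a misconception. In the Garg--Tamassia reduction \emph{every} edge has flexibility~0, not only ``the tendrils/wiggles''; tendrils and wiggles are $st$-subgraphs, not single edges, and the statement that they ``have degree~1 at both poles'' means each contributes exactly one edge-incidence at its pole (so substituting one for an edge of $F$ preserves the pole's degree in the ambient graph). Consequently there is no population of original inflexible edges with degree-1 endpoints that needs separate treatment, and the auxiliary ``attach flexible structure to degree-$\le 3$ endpoints'' step you sketch is unnecessary. Once degree-2 vertices are contracted away, the remaining vertices have degree~3 or~4, and the $W_3'$ replacement already handles the degree-3 case; that is the whole proof.
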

\begin{proof}
  Let $G$ be an instance of {\sc FlexDraw} such that the order of
  edges around each vertex is fixed up to reversal.  As {\sc FlexDraw}
  restricted to these kinds of instances is NP-hard, due to
  Lemma~\ref{lem:np-hardness-fixed-ordering}, it suffices to find an
  equivalent instance where additionally the endpoints of each
  inflexible have degree~4.  Pairs of edges incident to a vertex of
  degree~2 can be simply replaced by an edge with higher flexibility.
  Thus, we can assume that every vertex in $G$ has degree~3 or
  degree~4.  Replacing every degree-3 vertex incident to an inflexible
  edge by the subgraph $W_3'$ described above clearly leads to an
  equivalent instance with the desired properties.
\end{proof}

Similar to the replacement of degree-3 vertices by $W_3'$, we can
replace degree-4 vertices by the wheel $W_4$, setting the flexibility
of every edge of $W_4$ to~1.  It is easy to see, that every valid
orthogonal representation of $W_4$ has the same outer shape, that is a
rectangle, with one of the vertices $v_1, \dots, v_4$ on each side;
see Figure~\ref{fig:hardness-matching}(a).  Thus, replacing a vertex
$v$ with incident edges $e_1, \dots, e_4$ (in this order) by $W_4$,
attaching $e_1, \dots, e_4$ to the vertices $v_1, \dots, v_4$ yields
an equivalent instance of {\sc FlexDraw}.  We obtain the following
theorem.

\begin{theorem}
  \textsc{FlexDraw} is NP-complete even for instances of size $n$ with
  $O(n^\eps)$ inflexible edges with pairwise distance
  $\Omega(n^{1-\eps})$.
\end{theorem}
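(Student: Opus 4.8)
The plan is to start from the instance produced by Lemma~\ref{lem:np-hardness-fixde-ordering-deg-4}, which is already NP-hard and has every inflexible edge incident to two degree-4 vertices, and then to iteratively "dilute" the graph by replacing degree-4 vertices with copies of $W_4$ (each edge of $W_4$ having flexibility~1) so that any two inflexible edges become far apart, while keeping the total size polynomially bounded. The first step is to verify that replacing a degree-4 vertex $v$ by $W_4$ gives an equivalent \textsc{FlexDraw} instance: every valid orthogonal representation of $W_4$ with flexibility-1 edges has the outer shape of a rectangle with one of $v_1,\dots,v_4$ on each side (this is the fact already asserted in the text), so attaching the four edges $e_1,\dots,e_4$ formerly at $v$ to $v_1,\dots,v_4$ preserves the set of admissible rotation patterns at that vertex. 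Using the substitution lemma for thick edges (\cite[Lemma~5]{bkrw-ogdfc-12}) quoted at the end of Section~\ref{sec:preliminaries}, this replacement neither creates nor destroys valid orthogonal representations, so the reduction is sound.

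Next I would set up the quantitative bookkeeping. Let $G_0$ be the starting instance on $n_0$ vertices; it has some number $m_0 = O(n_0)$ of inflexible edges, each surrounded by degree-4 vertices, and some fixed constant $c$ bounding the diameter of $W_4$ together with its attachment structure. Define a sequence $G_0, G_1, G_2, \dots$ where $G_{i+1}$ is obtained from $G_i$ by simultaneously replacing \emph{every} vertex of $G_i$ by a copy of $W_4$ (vertices that are not degree~4 can be padded to degree~4 first, or handled by an analogous constant-size wheel-like gadget). Each such round multiplies the number of vertices by a constant factor $\gamma$ and multiplies all pairwise distances by a constant factor $\delta > 1$, since a shortest path of length $L$ in $G_i$ is forced to traverse at least $L$ of the newly inserted gadgets in $G_{i+1}$, each contributing at least one edge. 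After $r$ rounds we have a graph $G_r$ with $n := |V(G_r)| = \Theta(\gamma^r \cdot n_0)$ vertices, still $m_0$ inflexible edges (the replacements introduce only flexibility-1 edges and do not touch the original inflexible edges, whose endpoints are pushed further apart), and pairwise inflexible-edge distance at least $\delta^r$ times the original minimum distance.

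Now I choose $r$ as a function of $\eps$. Since $n = \Theta(\gamma^r n_0)$, we have $r = \Theta(\log n)$, and by picking $r = \lceil \eps' \log_\gamma n_0 \rceil$ for a suitable $\eps'$ tied to $\eps$ — or, more cleanly, by iterating until the pairwise distance first exceeds $n^{1-\eps}$ — the pairwise distance becomes $\Omega(n^{1-\eps})$. It remains to check the bound $m_0 = O(n^\eps)$ on the number of inflexible edges: we have $m_0 = O(n_0)$ fixed once and for all, while $n$ grows like $\gamma^r$, so as long as we take $r$ large enough that $\gamma^r \ge n_0^{1/\eps}$, i.e.\ $n \ge n_0 \cdot n_0^{1/\eps} = n_0^{1+1/\eps}$, we get $m_0 = O(n_0) = O(n^{\eps})$. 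Hardness is inherited along the whole chain because each individual replacement preserves equivalence of the \textsc{FlexDraw} instance, and \textsc{FlexDraw} is in NP since a valid orthogonal representation is a polynomial-size certificate checkable in polynomial time (properties \eqref{prop:1-extended}--\eqref{prop:4-extended} plus the existence of a realizing drawing \cite{t-eggmb-87}); hence \textsc{FlexDraw} is NP-complete on this restricted family.

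The main obstacle I expect is making the two simultaneous requirements — few inflexible edges \emph{and} large pairwise distance, both measured against the \emph{same} final size $n$ — consistent: diluting the graph to spread the inflexible edges apart also increases $n$, which relaxes the $O(n^\eps)$ budget, so one has to verify that a single choice of $r$ satisfies both $\delta^r \cdot d_0 \ge n^{1-\eps}$ and $m_0 \le n^{\eps}$ simultaneously; the computation above shows this is possible precisely because $m_0$ stays constant while both $n$ and the distances grow geometrically, but the bookkeeping with the constants $\gamma, \delta, c$ and the padding of non-degree-4 vertices is the fiddly part. A secondary technical point is confirming that the distance genuinely multiplies (rather than merely adding a constant) under the global replacement — this needs the observation that replacing \emph{all} vertices at once, not just those near inflexible edges, is what forces every path to grow by a constant factor.
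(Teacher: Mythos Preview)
Your use of the $W_4$ gadget to replace degree-4 vertices and your equivalence argument are correct and match the paper. The gap is quantitative: by replacing \emph{every} vertex in each round, you make the vertex count grow by a factor $\gamma\approx 5$ per round while pairwise distances grow by only a factor $\delta\le 2$ (the diameter of $W_4$ among its boundary vertices). Hence after $r$ rounds the distance is $\Theta(\delta^r)$ but $n=\Theta(\gamma^r n_0)$, so the distance scales only as $n^{\log_\gamma\delta}$ with $\log_\gamma\delta=\log_5 2\approx 0.43$. Your stopping rule ``iterate until the pairwise distance first exceeds $n^{1-\eps}$'' therefore never terminates once $1-\eps>\log_\gamma\delta$, i.e., for any $\eps$ below roughly $0.57$; no constant-size vertex gadget can push $\log_\gamma\delta$ up to~$1$. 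The subsidiary issue of how to ``pad'' non-degree-4 vertices without introducing new inflexible edges is also left open.

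The paper sidesteps both problems by replacing, in each round, \emph{only} the two endpoints of each inflexible edge --- which are guaranteed to have degree~4 by Lemma~\ref{lem:np-hardness-fixde-ordering-deg-4} and remain degree~4 after each replacement. Each round then adds $O(m_0)$ vertices and increases every inflexible-edge pairwise distance by an additive constant (any such shortest path must enter a freshly inserted $W_4$). After $t=n_0^{1/\eps-1}$ rounds one has $n=n_0+O(m_0\,t)=O(n_0^{1/\eps})$ and distance $\Omega(t)=\Omega(n_0^{1/\eps-1})$; substituting $n_0=n^\eps$ gives $O(n^\eps)$ inflexible edges at pairwise distance $\Omega(n^{1-\eps})$. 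So your instinct that the local-versus-global replacement is the delicate point was right, but the fix is the opposite of what you chose: \emph{additive} growth of both size and distance, not multiplicative, is what makes the bookkeeping close for every $\eps>0$.
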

\begin{proof}
  As \textsc{FlexDraw} is clearly in NP, it remains to show
  NP-hardness.  Let $G$ be the instance of \textsc{FlexDraw} such that
  the endpoints of each inflexible edge have degree~4 and such that
  the order of edges around each edge is fixed up to reversal.  {\sc
    FlexDraw} restricted to these kinds of instances is NP-hard due to
  Lemma~\ref{lem:np-hardness-fixde-ordering-deg-4}.  We show how to
  build an equivalent instance with $O(n^\eps)$ inflexible edges with
  pairwise distance $\Omega(n^{1-\eps})$ for any $\eps > 0$.

  Let $e$ be an inflexible edge in $G$ with incident vertices $u$ and
  $v$, which both have degree~4.  Replacing each of the vertices $u$
  and $v$ by the wheel~$W_4$ yields an equivalent instance of
  \textsc{FlexDraw} and the distance of $e$ to every other inflexible
  edge is increased by a constant.  Note that this does not increase
  the number of inflexible edges.  Let $n_G$ be the number of vertices
  in $G$.  Applying this replacement $n_G^{1/\eps-1}$ times to the
  vertices incident to each inflexible edge yields an equivalent
  instance $G'$.  In $G'$ every pair of inflexible edges has distance
  $\Omega(n_G^{1/\eps-1})$.  Moreover, $G'$ has size $O(n_G^{1/\eps})$,
  as we have $n_G$ inflexible edges.  Substituting $n_G^{1/\eps}$ by
  $n$ shows that we get an instance of size $n$ with $O(n^\eps)$
  inflexible edges with pairwise distance $\Omega(n^{1-\eps})$.
\end{proof}

Note that the instances described above may contain edges with
flexibility larger than~1.  We can get rid of that as follows.  An
edge $e$ with flexibility~$\flex(e) > 0$ can have the same numbers of
bends like the $st$-graph consisting of the wheel $W_4$
(Figure~\ref{fig:hardness-matching}(a)) with the additional edges
$sv_1$ with $\flex(sv_1) = 1$ and $tv_3$ with $\flex(tv_3) = \flex(e)
- 1$.  Thus, we can successively replace edges with rotation above~1
by these kinds of subgraphs, leading to an equivalent instance where
all edges have flexibility~1 or~0.

\section{The General Algorithm}
\label{sec:general-algorithm}

In this section we describe a general algorithm that can be used to
solve {\sc OptimalFlexDraw} by solving smaller subproblems for the
different types of graph compositions.  To this end, we start with the
definition of cost functions for subgraphs, which is straightforward.
The \emph{cost function} $\cost(\cdot)$ of an $st$-graph $G$ is
defined such that $\cost(\beta)$ is the minimum cost of all orthogonal
representations of $G$ with $\beta$ bends.  The \emph{$(\sigma,
  \tau)$-cost function} $\cost^\sigma_\tau(\cdot)$ of $G$ is defined
analogously by setting $\cost^\sigma_\tau(\beta)$ to the minimum cost
of all $(\sigma, \tau)$-orthogonal representations of $G$ with $\beta$
bends.  Clearly, $\sigma, \tau \in \{1, \dots 4\}$, though, for a
fixed graph $G$, not all values may be possible.  If for example
$\deg(s) = 1$, then $\sigma$ is~1 for every orthogonal representation
of $G$.  Note that there is a lower bound on the number of bends
depending on $\sigma$ and $\tau$.  For example, a $(2, 2)$-orthogonal
representation has at least one bend and thus $\cost_2^2(0)$ is
undefined.  We formally set undefined values to $\infty$.

With \emph{the cost functions} of $G$ we refer to the collection of
$(\sigma, \tau)$-cost functions of $G$ for all possible combinations
of $\sigma$ and $\tau$.  Let $G$ be the composition of two or more
(for a rigid composition) graphs $G_1, \dots, G_\ell$.  Computing the
cost functions of $G$ assuming that the cost functions of $G_1, \dots,
G_\ell$ are known is called \emph{computing cost functions of a
  composition}.  The following theorem states that the ability to
compute cost functions of compositions suffices to solve {\sc
  OptimalFlexDraw}.  The terms $T_S$, $T_P$ and $T_R(\ell)$ denote the
time necessary to compute the cost functions of a series, a parallel,
and a rigid composition with skeleton of size $\ell$, respectively.

\begin{theorem}
  \label{thm:general-algo-biconn}
  Let $G$ be an $st$-graph containing the edge $st$.  An optimal
  $(\sigma, \tau)$-orthogonal representation of $G$ with $st$ on the
  outer face can be computed in $O(n T_S + n T_P + T_R(n))$
  time.
\end{theorem}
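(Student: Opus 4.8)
The plan is to use the SPQR-tree $\mathcal T$ of $G$ rooted at the Q-node representing $st$, and to process it in a bottom-up fashion, computing at each node $\mu$ the cost functions of its pertinent graph $\pert(\mu)$ (viewed as an $st$-graph with the poles $s_\mu, t_\mu$). Since $\pert(\mu)$ is obtained from its children's pertinent graphs by a single series, parallel, or rigid composition (corresponding to whether $\mu$ is an S-, P-, or R-node), and the cost functions of a Q-node are trivial (they encode a single edge with its cost function $\cost_e$), a post-order traversal reduces the whole computation to a sequence of ``computing cost functions of a composition'' steps. Having the cost functions of $\pert(\mathrm{root}') = G$, where $\mathrm{root}'$ is the unique child of the root, we can read off the optimal $(\sigma,\tau)$-cost for any desired $\sigma,\tau$, and then unwind the dynamic program to actually construct an optimal $(\sigma,\tau)$-orthogonal representation by choosing, top-down, the optimal split of bends and occupied incidences among children at each composition.

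First I would make precise how the cost functions propagate. For a Q-node the cost function is given directly by the input cost function of the corresponding edge, restricted to $\sigma=\tau=1$ (a thin edge), so no work is needed beyond $O(1)$ per Q-node. For an S-node with children $\mu_1,\dots,\mu_\ell$, the pertinent graph is the series composition of the $\pert(\mu_i)$; computing its cost functions from the children's cost functions is by definition a single series-composition step, but since the skeleton path has $\ell$ edges one must be careful to charge only $T_S$ per \emph{binary} series composition (or define $T_S$ to handle a path of length $\ell$ in time $\ell \cdot T_S$); I would fold the $\ell$ children of an S-node into $\ell-1$ binary series compositions. Similarly a P-node with $\ell$ children contributes $\ell-1$ binary parallel compositions, and an R-node with skeleton of size $\ell$ contributes one rigid composition costing $T_R(\ell)$. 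Summing over all nodes: the total number of virtual edges over all skeletons is $O(n)$ (a standard SPQR-tree fact), so the series and parallel work is $O(n\,T_S)$ and $O(n\,T_P)$ respectively, and the rigid work is $\sum_\mu T_R(\ell_\mu)$; assuming $T_R$ is superadditive (or at least $\sum T_R(\ell_\mu) \le T_R(\sum \ell_\mu) = T_R(O(n))$), this is $O(T_R(n))$. This yields the claimed $O(n\,T_S + n\,T_P + T_R(n))$ bound.

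The correctness argument rests on the compositional structure already established in the preliminaries: by the SPQR-tree theory every $st$-graph is a single edge or a series/parallel/rigid composition, and by \cite[Lemma~5]{bkrw-ogdfc-12} replacing a $(\sigma,\tau)$-edge with $\beta$ bends by a $(\sigma,\tau)$-orthogonal representation with $\beta$ bends of an arbitrary $st$-graph produces a valid orthogonal representation (and conversely every orthogonal representation of the composition restricts to orthogonal representations of the parts). Hence for each node $\mu$, the set of $(\sigma,\tau)$-orthogonal representations of $\pert(\mu)$ with $\beta$ bends is in cost-preserving correspondence with the ways of choosing a skeleton embedding together with, for each child, a choice of its occupied incidences and bend count that is consistent with the skeleton's rotation constraints \eqref{prop:1-extended}--\eqref{prop:4-extended}; this is exactly what the composition step optimizes over. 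An induction on the tree then shows $\cost^\sigma_\tau$ computed at $\mu$ equals the true minimum cost, and the top-down retrace produces a witnessing representation. One subtlety is the treatment of the parent edge in each skeleton and of the outer face / pole-on-outer-face requirement: at node $\mu$ the parent edge $s_\mu t_\mu$ plays the role of the ``rest of the graph'' and must be placed on the outer face of $\skel(\mu)$, which is precisely the convention under which the $(\sigma,\tau)$-cost functions were defined, so the bookkeeping matches; at the root one additionally fixes the embedding so that $st$ is on the outer face.

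The main obstacle I expect is not any single step but the careful accounting that makes the three composition subroutines compose cleanly — in particular ensuring that the ranges of $\sigma,\tau$ and the domain restrictions (e.g.\ $\cost_2^2(0) = \infty$, lower bounds on bends forced by occupied incidences at the poles) are consistently propagated, and that the rigid composition's time $T_R(\ell_\mu)$ aggregates to $T_R(n)$ rather than something larger. A secondary point to get right is that for S- and P-nodes with many children one genuinely must iterate binary compositions (and argue this still costs only $O(n\,T_S)$, $O(n\,T_P)$ in total, using that $\sum_\mu (\ell_\mu - 1) = O(n)$), rather than pretending a single call handles an arbitrary-arity composition. Everything else is a routine post-order dynamic program over the SPQR-tree.
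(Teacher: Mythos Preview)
Your approach is essentially the paper's: root the SPQR-tree at the Q-node for $st$, process bottom-up, and compute the cost functions of each pertinent graph from those of its children via the appropriate composition. The running-time accounting via total skeleton size is also the same (and you are in fact more explicit than the paper about needing $T_R$ to be superadditive for $\sum_\mu T_R(\ell_\mu) \le T_R(n)$).

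There is, however, one genuine gap in your handling of P-nodes. You write that a P-node with $\ell$ children ``contributes $\ell-1$ binary parallel compositions'', but a fixed sequence of binary parallel compositions does \emph{not} explore all planar embeddings of the P-skeleton. Concretely, if $\mu$ has three children $A,B,C$, composing $((A,B),C)$ yields only the orders in which $A$ and $B$ are adjacent; the embedding with $C$ between $A$ and $B$ is never produced, so the resulting cost function can be too pessimistic. The paper addresses exactly this point: because the graph is 4-planar, a P-skeleton has at most three virtual edges (not counting the parent edge), hence only $O(1)$ composition orders; one simply tries all of them and takes the pointwise minimum, still spending $O(T_P)$ per P-node. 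Your bound $O(n\,T_P)$ survives once this fix is made, but the argument as written is incomplete.

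A minor remark: $\pert(\mathrm{root}')$, where $\mathrm{root}'$ is the child of the root Q-node, is $G - st$, not $G$. The paper handles the root by a final parallel composition of $\pert(\mathrm{root}')$ with the edge $st$; this is where the requirement that $st$ lie on the outer face is enforced. You should make this last step explicit.
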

\begin{proof}
  Let $\mathcal T$ be the SPQR-tree of $G$.  To compute an optimal
  orthogonal representation of $G$ with $st$ on the outer face, we
  root $\mathcal T$ at the Q-node corresponding to $st$ and traverse
  it bottom up.  When processing a node $\mu$, we compute the cost
  functions of $\pert(\mu)$, which finally (in the root) yields the
  cost functions of the $st$-graph $G$ and thus optimal $(\sigma,
  \tau)$-orthogonal representations (for all possible values of
  $\sigma$ and $\tau$) with $st$ on the outer face.

  If $\mu$ is a \textbf{Q-node} but not the root, then $\pert(\mu)$ is
  an edge and the cost function of this edge is given with the input.

  If $\mu$ is an \textbf{S-node}, its pertinent graph can be obtained
  by applying multiple series compositions.  Since the skeleton of an
  S-node leaves no embedding choice, we can compute the cost function
  of $\pert(\mu)$ by successively computing the cost functions of the
  compositions, which takes $O(|\skel(\mu)|\cdot T_S)$ time.

  If $\mu$ is a \textbf{P-node}, then $\pert(\mu)$ can be obtained by
  applying multiple parallel compositions.  In contrast to S-nodes the
  skeleton of a P-node leaves an embedding choice, namely changing the
  order of the parallel edges.  As composing the pertinent graphs of
  the children of $\mu$ in a specific order restricts the embedding of
  $\skel(\mu)$, we cannot apply the compositions in an arbitrary order
  if $\skel(\mu)$ contains more than two parallel edges (not counting
  the parent edge).  However, since $\skel(\mu)$ contains at most
  three parallel edges (due to the restriction to degree~4), we can
  try all composition orders and take the minimum over the resulting
  cost functions.  As there are only constantly many orders and for
  each order a constant number of compositions is performed, computing
  the cost function of $\pert(\mu)$ takes $O(T_P)$ time.

  If $\mu$ is an \textbf{R-node}, the pertinent graph of $\mu$ is the
  rigid composition of the pertinent graphs of its children with
  respect to the skeleton $\skel(\mu)$.  Thus, the cost functions of
  $\pert(\mu)$ can be computed in $O(T_R(|\skel(\mu)|))$ time.
  
  If $\mu$ is the \textbf{root}, that is the Q-node corresponding to
  $st$, then $\pert(\mu) = G$ is a parallel composition of the
  pertinent graph of the child of $\mu$ and the edge $st$ and thus its
  cost function can be computed in $O(T_P)$ time.

  As the total size of S-node skeletons, the number of P-nodes and the
  total size of R-node skeletons is linear in the size of $G$, the
  running time is in $O(n \cdot T_S + n \cdot T_P + T_R(n))$.
\end{proof}

Applying Theorem~\ref{thm:general-algo-biconn} for each pair of
adjacent nodes as poles in a given instance of {\sc OptimalFlexDraw}
yields the following corollary.

\begin{corollary}
  \label{cor:general-algo-biconn}
  {\sc OptimalFlexDraw} can be solved in $O(n \cdot (n T_S + n T_P +
  T_R(n)))$ time for biconnected graphs.
\end{corollary}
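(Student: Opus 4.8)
The plan is to reduce {\sc OptimalFlexDraw} on a biconnected graph $G$ to $O(n)$ invocations of Theorem~\ref{thm:general-algo-biconn}, one for each edge of $G$ used as the designated edge $st$. First I would recall that an orthogonal representation of $G$ has a distinguished outer face, and that every orthogonal representation of a biconnected graph has at least one edge on its outer face. Fixing such an edge $e = st$ and declaring it the special parent edge, the representation in question is one of the orthogonal representations of the $st$-graph $G$ (with $st$ on the outer face) in the sense used in Theorem~\ref{thm:general-algo-biconn}. Conversely, every $(\sigma,\tau)$-orthogonal representation produced by that theorem is a genuine orthogonal representation of $G$ with the same total cost, since the cost functions were defined to track exactly $\sum_e \cost_e(\beta_e)$. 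Hence the minimum of the quantity we want equals the minimum, over all edges $e$ of $G$ and all admissible pairs $(\sigma,\tau)$, of $\cost^\sigma_\tau(\beta)$ as returned when $G$ is rooted at the Q-node of $e$.

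The key steps, in order: (1) observe that some edge lies on the outer face of an optimal drawing, so ranging $e$ over all edges of $G$ is sufficient to capture an optimal solution; (2) for each such $e$, apply Theorem~\ref{thm:general-algo-biconn} with $st := e$, obtaining in $O(n T_S + n T_P + T_R(n))$ time the cost functions of $G$, i.e.\ an optimal $(\sigma,\tau)$-orthogonal representation of $G$ for every feasible $(\sigma,\tau)$; (3) take, over the $O(n)$ choices of $e$ and the constantly many pairs $(\sigma,\tau)$, the representation of minimum cost, and return it. Multiplying the per-edge running time by the $O(n)$ edges gives the claimed $O(n \cdot (n T_S + n T_P + T_R(n)))$ bound; the final minimization over $O(n)$ candidates is subsumed by this.

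The only point that needs a little care — and the place I expect the main (mild) obstacle — is step (1): arguing that it really suffices to try edges of $G$ rather than, say, faces or more exotic choices of outer boundary. This is handled by the standard fact that in any planar orthogonal drawing the outer face is bounded by a closed walk containing at least one edge, together with the observation in the excerpt that for fixed poles $s,t$ there is a bijection between embeddings of $G$ with $st$ on the outer face and combinations of skeleton embeddings; choosing as $st$ any edge on the outer boundary of an optimal representation makes that representation one of those enumerated by Theorem~\ref{thm:general-algo-biconn}. Everything else is bookkeeping: the cost functions faithfully encode total bend cost by construction, so no information is lost in passing to the $st$-graph formulation, and the union bound over edges is immediate.
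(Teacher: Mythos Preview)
Your proposal is correct and follows essentially the same approach as the paper, which justifies the corollary in a single sentence by ``applying Theorem~\ref{thm:general-algo-biconn} for each pair of adjacent nodes as poles.'' Your write-up simply spells out in more detail why ranging over all edges of $G$ (each serving as the edge $st$) suffices to capture an optimal representation, and why the $O(n)$ factor arises; this is exactly the intended argument.
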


In the following, we extend this result to the case where $G$ may
contain cutvertices.  The extension is straightforward, however, there
is one pitfall.  Given two blocks $B_1$ and $B_2$ sharing a cutvertex
$v$ such that $v$ has degree~2 in $B_1$ and $B_2$, we have to ensure
for both blocks that $v$ does not form an angle of $180^\circ$.
% In fact, we would need to ensure that one of the blocks has $v$ on its
% outer face $f$ such that $v$ forms an angle of $270^\circ$ in $f$.
% However, we can cover this case by using $v$ as one of the poles and
% restricting the number of occupied incidences of $v$ to~2.
Thus, for a given graph $G$, we get for each block a list of vertices
and we restrict the set of all orthogonal representations of $G$ to
those where these vertices form~$90^\circ$ angles.  We call these
orthogonal representations \emph{restricted orthogonal
  representations}.  Moreover, we call the resulting cost functions
\emph{restricted cost functions}.  We use the terms $T_S^r$, $T_P^r$
and $T_R^r(\ell)$ to denote the time necessary to compute the
\emph{restricted} cost functions of a series, a parallel, and a rigid
composition, respectively.  We get the following extension of the
previous results.

\begin{theorem}
  \label{thm:general-algo}
  {\sc OptimalFlexDraw} can be solved in $O(n \cdot (n T_S^r + n T_P^r
  + T_R^r(n)))$ time.
\end{theorem}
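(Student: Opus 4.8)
The plan is to reduce the general case to the biconnected case handled by Corollary~\ref{cor:general-algo-biconn} (applied with restricted cost functions throughout), by decomposing $G$ along its cutvertices using the block-cutvertex tree. First I would root the block-cutvertex tree $\mathcal B$ of $G$ at an arbitrary block and process it bottom-up, in the spirit of the SPQR-tree traversal in the proof of Theorem~\ref{thm:general-algo-biconn}. For a block $B$ that is a leaf of $\mathcal B$, the only complication is that its parent cutvertex $v$ may have degree~$2$ in $B$; we record $v$ in the list of vertices of $B$ forced to form a $90^\circ$ angle, and compute the restricted cost functions of $B$ via Theorem~\ref{thm:general-algo-biconn} (using restricted cost functions, hence the terms $T^r_S, T^r_P, T^r_R$) with the two poles being $v$ and any neighbour — actually, we need the cost functions of $B$ as an $st$-graph with pole $v$ for every choice of the second pole, but since we only care about the behaviour of $B$ ``seen from $v$'', we compute, for each value $\sigma$ of occupied incidences at $v$, the minimum cost of a restricted orthogonal representation of $B$ in which $v$ has $\sigma$ occupied incidences. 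This is a one-dimensional cost function $\cost^\sigma(\cdot)$ in the single parameter $\sigma \in \{1,2,3,4\}$, obtained from the cost functions of $B$ as a biconnected graph by minimising over the second pole and over the bend count (the bend count of $B$ at $v$ is irrelevant once $B$ is glued in at the cutvertex, so we simply take $\min_\beta$).

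Next, for an internal block $B$ with parent cutvertex $v$ and child cutvertices $w_1, \dots, w_m$, each $w_j$ has already been assigned, by the recursion, a cost value $c_j(\sigma_j)$ for each possible number $\sigma_j$ of incidences of $w_j$ that are occupied by the subtree hanging below $w_j$ (i.e.\ by the blocks of $G$ that are descendants of $w_j$ and not equal to $B$). We then treat $B$ itself as a biconnected instance of {\sc OptimalFlexDraw} in which the cost function of each $w_j$-incident configuration is augmented: concretely, we add to each $w_j$ a pendant structure (or simply modify the degree bound and the cost accounting at $w_j$) that charges $c_j(\sigma_j)$ whenever the blocks below $w_j$ occupy $\sigma_j$ of the four incidences of $w_j$, and we again force the $90^\circ$ angle at $v$ if $\deg_B(v)=2$. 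Running Corollary~\ref{cor:general-algo-biconn} on this modified biconnected instance (with the restricted cost functions) yields the cost function $\cost^\sigma(\cdot)$ of the whole subtree rooted at $B$, as seen from $v$, again in time $O(n_B \cdot (n_B T^r_S + n_B T^r_P + T^r_R(n_B)))$ where $n_B = |B|$. Since the blocks partition the edges of $G$ and each cutvertex lies in at most $\deg(v) \le 4$ blocks, $\sum_B n_B = O(n)$, and the total running time telescopes to $O(n \cdot (n T^r_S + n T^r_P + T^r_R(n)))$ as claimed; at the root block we have no parent cutvertex, so the minimum over all recorded cost values gives the optimum for $G$.

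The main obstacle is the bookkeeping at a cutvertex $v$ shared by several blocks: when we glue blocks $B_0$ (the one closer to the root) and $B_1, \dots, B_r$ (subtrees) together at $v$, the four incidences of $v$ must be partitioned among these blocks consistently with each block's own demand on $v$, and — as the remark preceding the theorem points out — we must avoid creating a $180^\circ$ angle at $v$ inside a block where $v$ has degree~$2$. I would handle this by making the ``occupied incidences'' parameter $\sigma$ at a cutvertex a genuine part of the interface exactly as the $(\sigma,\tau)$-orthogonal-representation machinery already does for poles: a block contributes a cost function indexed by how many of $v$'s incidences it uses, the feasibility constraint is simply that these numbers sum to $4$, and the $90^\circ$ restriction is exactly what the restricted cost functions enforce. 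One subtlety worth spelling out is that a single block may contain several of the cutvertices that need restriction simultaneously (it can be the parent-side block for one cutvertex and a child-side block for others), but since ``restricted orthogonal representation'' already allows an arbitrary list of vertices per block to be constrained, Theorem~\ref{thm:general-algo-biconn} applies verbatim to such a block. Thus the only real content beyond Corollary~\ref{cor:general-algo-biconn} is verifying that composing the per-block cost functions over the block-cutvertex tree is correct and does not blow up the running time, both of which follow from $\sum_B |B| = O(n)$ and from the fact that every restricted orthogonal representation of $G$ decomposes uniquely into restricted orthogonal representations of its blocks with matching incidence counts at the cutvertices.
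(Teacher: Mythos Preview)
Your approach differs from the paper's in a substantive way, and it contains a genuine gap.

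\textbf{The paper's approach.} The paper does \emph{not} run a dynamic program over the block--cutvertex tree. Instead it exploits the fact that, once the $90^\circ$ restriction at the relevant degree-$2$ cutvertices is in force, the blocks decouple completely: for any fixed root of the BC-tree, the optimal restricted representation of $G$ is simply the \emph{sum} of independently computed optimal restricted representations of the blocks (the root block unconstrained, every other block with its parent cutvertex on its outer face). So the algorithm just precomputes, for every block $B$ and every cutvertex $v\in B$, an optimal restricted representation of $B$ with $v$ on the outer face (via Theorem~\ref{thm:general-algo-biconn}, iterating over the $O(1)$ edges incident to $v$), and also an unconstrained optimum for each block. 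It then tries \emph{every} B-node as root and sums the appropriate precomputed values. The precomputation dominates and gives the stated bound.

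\textbf{The gap in your proposal.} You root the BC-tree at a single \emph{arbitrary} block and process bottom-up. This misses embeddings of $G$ in which the outer face lies entirely in the territory of some non-root block. Concretely, take two blocks $B_0$ (your root) and $B_1$ sharing a cutvertex $v$: the embedding in which $B_0$ sits inside an inner face of $B_1$ has its outer face bounded only by edges of $B_1$. In your DP, $B_1$ is encoded as a pendant at $v$ inside $B_0$'s drawing; unfolding that pendant always merges $B_1$'s outer face with some face of $B_0$, so the resulting outer face of $G$ is incident to $B_0$ --- you never realise the embedding where $B_1$ surrounds $B_0$. The paper's sentence ``every embedding of $G$ is such a restricted embedding with respect to \emph{some} root of $\mathcal T$'' is exactly why it iterates over all roots; with a single root you do not cover all planar embeddings. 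Fixing this by running your DP once per root costs an extra factor of $n$ and overshoots the claimed bound, whereas the paper's independence observation lets it amortise across roots.

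\textbf{A secondary issue.} Your mechanism for feeding subtree costs $c_j(\sigma_j)$ back into the parent block (``add a pendant structure, or simply modify the degree bound'') is not something Corollary~\ref{cor:general-algo-biconn} or Theorem~\ref{thm:general-algo-biconn} handles as stated: those results take edge cost functions indexed by bend count, not vertex costs indexed by a variable number of occupied incidences. Encoding a variable-$\sigma$ interface as a thick edge requires either fixing $\sigma$ (and then iterating over all combinations across the child cutvertices, which is not obviously polynomial) or extending the flow framework; either way this needs an argument you have not supplied. The paper sidesteps all of this precisely because it never needs to propagate costs between blocks.
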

\begin{proof}
  Let $G$ be an instance of {\sc OptimalFlexDraw}.  We use the BC-tree
  (Block--Cutvertex Tree) of $G$ to represent all possible ways of
  combining embeddings of the blocks of $G$ to an embedding of $G$.
  The \emph{BC-tree} $\mathcal T$ of $G$ contains a B-node for each
  block of $G$, a C-node for each cutvertex of $G$ and an edge between
  a C-node and a B-node if and only if the corresponding cutvertex is
  contained in the corresponding block, respectively.

  Rooting $\mathcal T$ at some B-node restricts the embeddings of the
  blocks as follows.  Let $\mu$ be a B-node (but not the root)
  corresponding to a block $B$ and let $v$ be the cutvertex
  corresponding to the parent of $\mu$.  Then the embedding of $B$ is
  required to have $v$ on its outer face.  It is easy to see that
  every embedding of $G$ is such a restricted embedding with respect
  to some root of $\mathcal T$.  Thus, it suffices to consider each
  B-node of $\mathcal T$ as root and restrict the embeddings as
  described above.

  Before we deal with the BC-tree $\mathcal T$, we preprocess each
  block $B$ of $G$.  Let $v$ be a cutvertex of $B$.  For an edge $e$
  incident to $v$, we can use Theorem~\ref{thm:general-algo-biconn} to
  compute an optimal orthogonal representation of $B$ with $e$ on the
  outer face in $O(n\cdot T_S + n \cdot T_P + T_R(n))$ time.  Since
  ever orthogonal representation with $v$ on the outer face has one of
  its incident edges on the outer face, we can simply force each of
  these edges to the outer face once, to get an optimal orthogonal
  representation of $B$ with $v$ on the outer face.  Clearly, using
  the computation of restricted cost functions yields an optimal
  restricted orthogonal representation.  Doing this for each block of
  $G$ and for each cutvertex in this block leads to a total running
  time of $O(n\cdot(n\cdot T_S + n \cdot T_P + T_R(n)))$.  Moreover,
  we can compute an optimal restricted orthogonal representation of
  each block (without forcing a vertex to the outer face) with the
  same running time (Corollary~\ref{cor:general-algo-biconn}).

  To compute an optimal orthogonal representation of $G$ we choose
  every B-node of the BC-tree $\mathcal T$ as root and consider for
  the block corresponding to the root the optimal orthogonal
  representation (without forcing vertices to the outer face).  For
  all other blocks we consider the optimal orthogonal representation
  with the cutvertex corresponding to its parent on the outer face.
  Note that these orthogonal representations can be easily combined to
  an orthogonal representation of the whole graph, as we enforce
  angles of $90^\circ$ at vertices of degree~2, if they have degree~2
  in another block.  The minimum over all roots leads to an optimal
  orthogonal representation.  As computing this minimum takes $O(n^2)$
  time, it is dominated by the running time necessary to compute the
  orthogonal representation of the blocks.
\end{proof}

Note that Theorem~\ref{thm:general-algo} provides a framework for
uniform treatment of bend minimization over all planar embeddings in
orthogonal drawings.  In particular, the polynomial-time algorithm for
{\sc FlexDraw} with positive flexibility~\cite{bkrw-ogdfc-12} can be
expressed in this way.  There, all resulting cost functions of
$st$-graphs are~0 on a non-empty interval containing~0 (with one minor
exception) and~$\infty$, otherwise.  Thus, the cost functions of the
compositions can be computed using Tamassia's flow network.  The
results on {\sc OptimalFlexDraw}~\cite{brw-oogd-13} can be expressed
similarly.  When restricting the number of bends of each $st$-graph
occurring in the composition to~3, all resulting cost functions are
convex (with one minor exception).  Thus, Tamassia's flow network can
again be used to compute the cost functions of the compositions.  The
overall optimality follows from the fact that there exists an optimal
solution that can be composed in such a way.  In the following
sections we see two further applications of this framework, resulting
in efficient algorithms.

\section{Series-Parallel Graphs}
\label{sec:seri-parall-graphs}

In this section we show that the cost functions of a series
composition (Lemma~\ref{lem:general-series-composition}) and a
parallel composition (Lemma~\ref{lem:general-parallel-composition})
can be computed efficiently.  Using our framework, this leads to a
polynomial-time algorithm for {\sc OptimalFlexDraw} for
series-parallel graphs with monotone cost functions
(Theorem~\ref{thm:sp}).  We note that this is only a slight extension
to the results by Di Battista et al.~\cite{blv-sood-98}.  However, it
shows the easy applicability of the above framework before diving into
the more complicated FPT-algorithm in the following section.

\begin{lemma}
  \label{lem:general-series-composition}
  If the (restricted) cost functions of two $st$-graphs are $\infty$
  for bend numbers larger than $\ell$, the (restricted) cost functions
  of their series composition can be computed in $O(\ell^2)$ time.
\end{lemma}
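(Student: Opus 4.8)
The plan is to characterize the possible outer shapes of a series composition $G$ of two $st$-graphs $G_1$, $G_2$ (with $t_1 = s_2$) in terms of the outer shapes of $G_1$ and $G_2$, and then observe that the resulting cost function is obtained by a min-plus convolution. Recall that the outer shape of an $st$-graph is determined by the two rotations $\rot(\pi(s,t))$ and $\rot(\pi(t,s))$ together with the number of occupied incidences $\sigma$, $\tau$ at the poles; and the number of bends is $\max\{|\rot(\pi(s,t))|, |\rot(\pi(t,s))|\}$, which by assumption is at most $\ell$ for $G_1$ and $G_2$ (otherwise the cost is $\infty$).

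**First I would** work out how the rotation values compose. In a series composition, the outer face of $G$ decomposes: the path $\pi(s_1, t_2)$ along the outer face of $G$ is the concatenation of $\pi(s_1, t_1)$ in $G_1$, a rotation contribution at the shared vertex $v = t_1 = s_2$ in the outer face of $G$, and $\pi(s_2, t_2)$ in $G_2$; symmetrically for $\pi(t_2, s_1)$. The rotation at $v$ in each of the two outer-face incidences is governed by how many incidences of $v$ are occupied on each side, i.e.\ by $\tau_1$ (occupied incidences of $v$ in $G_1$) and $\sigma_2$ (occupied incidences of $v$ in $G_2$), via property~\eqref{prop:3-extended} and the range constraint~\eqref{prop:4-extended}. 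So for each admissible choice of how the four incidences of $v$ are split between $G_1$-side, $G_2$-side, top outer face, and bottom outer face, one gets a fixed additive offset relating the rotations of $G$ to those of $G_1$ and $G_2$. Since $v$ has degree at most~4 there are only a constant number of such splits to enumerate.

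**Then** for each split, $\rot(\pi(s_1,t_2)) = \rot(\pi(s_1,t_1)) + c_1 + \rot(\pi(s_2,t_2))$ and $\rot(\pi(t_2,s_1)) = \rot(\pi(t_1,s_1)) + c_2 + \rot(\pi(t_2,s_2))$ for fixed integers $c_1, c_2$ depending only on the split; and $\sigma = \sigma_1$, $\tau = \tau_2$. The cost of a resulting representation of $G$ is the sum of the costs of the two parts (the shared vertex contributes no edge cost). Thus $\cost^{\sigma}_{\tau}$ of $G$ for a given target number of bends $\beta$ is the minimum, over the constantly many splits and over all ways of writing the two outer rotations of $G$ as the above sums with each summand's absolute value contributing correctly to $\beta_1 \le \ell$ and $\beta_2 \le \ell$, of $\cost^{\sigma_1}_{\tau_1}(\beta_1) + \cost^{\sigma_2}_{\tau_2}(\beta_2)$. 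Because each part has at most $2\ell+1$ relevant rotation values per outer path, and the offsets are fixed, this is a min-plus convolution over $O(\ell)$-sized domains: for each of $O(\ell)$ values of $\beta_1$ we read off the forced $\beta_2$ (or $O(1)$ candidates for it), giving $O(\ell)$ work per target bend number and $O(\ell^2)$ overall; enumerating the $O(1)$ incidence splits and the $O(1)$ combinations of $(\sigma,\tau)$ does not change the bound.

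**The main obstacle** I expect is the bookkeeping around the shared vertex $v$: carefully verifying, using properties~\eqref{prop:2-extended}--\eqref{prop:4-extended}, exactly which incidence splits at $v$ are feasible and what additive constants $c_1, c_2$ they induce on the outer-face rotations, and making sure that the ``mirror'' freedom implicit in defining bends as $\max\{|\rot(\pi(s,t))|,|\rot(\pi(t,s))|\}$ is handled consistently on both sides (so that one does not double-count or miss the flip of $G_1$ relative to $G_2$). Once that case analysis is pinned down, the convolution and its $O(\ell^2)$ running time are routine, and the restricted case is identical except that, if $v$ is one of the vertices required to form a $90^\circ$ angle, one simply discards the splits that would make it straight — which only removes cases and preserves the bound.
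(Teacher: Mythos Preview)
Your proposal is correct and follows essentially the same approach as the paper's proof: enumerate the constantly many choices at the shared vertex $v$ (the paper phrases these as $\tau_1$, $\sigma_2$, the rotations at $v$ in the outer face, and which of $\pi(s_i,t_i)$ or $\pi(t_i,s_i)$ determines $\beta_i$), observe that once these and $\beta_1$ are fixed the value $\beta_2$ is forced, and take the minimum over the resulting $O(\ell)$ candidates for each of the $O(\ell)$ target bend numbers~$\beta$. Your ``min-plus convolution'' framing is just a more explicit name for the same computation; the restricted case is handled identically in both.
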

\begin{proof}
  We first consider the case of non-restricted cost functions.  Let
  $G_1$ and $G_2$ be the two $st$-graphs with poles $s_1$, $t_1$ and
  $s_2$, $t_2$, respectively, and let $G$ be their series composition
  with poles $s = s_1$ and $t = t_2$.  For each of the constantly many
  valid combinations of $\sigma$ and $\tau$, we compute the $(\sigma,
  \tau)$-cost function separately.  Assume for the following, that
  $\sigma$ and $\tau$ are fixed.  Since~$G_1$ and~$G_2$ both have at
  most~$\ell$ bends, $G$ can only have~$O(\ell)$ possible values for
  the number of bends~$\beta$.  We fix the value $\beta$ and show how
  to compute $\cost^\sigma_\tau(\beta)$ in $O(\ell)$ time.

  Let $\mathcal R$ be a $(\sigma, \tau)$-orthogonal representation
  with $\beta$ bends and let $\mathcal R_1$ and $\mathcal R_2$ be the
  $(\sigma_1, \tau_1)$- and $(\sigma_2, \tau_2)$-orthogonal
  representations induced for $G_1$ and $G_2$, respectively.
  Obviously, $\sigma_1 = \sigma$ and $\tau_2 = \tau$ holds.  However,
  there are the following other parameters that may vary (although
  they may restrict each other).  The parameters $\tau_1$ and
  $\sigma_2$; the number of bends $\beta_1$ and $\beta_2$ of $\mathcal
  R_1$ and $\mathcal R_2$, respectively; the possibility that for $i
  \in \{1, 2\}$ the number of bends of $\mathcal R_i$ are determined
  by $\pi(s_i, t_i)$ or by $\pi(t_i, s_i)$, that is $\beta_i =
  -\rot(\pi(s_i, t_i))$ or $\beta_i = -\rot(\pi(t_i, s_i))$; and
  finally, the rotations at the vertex $v$ in the outer face, where
  $v$ is the vertex of $G$ belonging to both, $G_1$ and $G_2$.

  Assume we fixed the parameters $\tau_1$ and $\sigma_2$, the choice
  by which paths $\beta_1$ and $\beta_2$ are determined, the rotations
  at the vertex $v$, and the number of bends $\beta_1$ of $\mathcal
  R_1$.  Then there is no choice left for the number of bends
  $\beta_2$ of $\mathcal R_2$, as choosing a different value for
  $\beta_2$ also changes the number of bends $\beta$ of $G$, which was
  assumed to be fixed.  As each of the parameters can have only a
  constant number of values except for $\beta_1$, which can have
  $O(\ell)$ different values, there are only $O(\ell)$ possible
  choices in total.  For each of these choices, we get a $(\sigma,
  \tau)$-orthogonal representation of $G$ with $\beta$ bends and cost
  $\cost^{\sigma_1}_{\tau_1}(\beta_1) +
  \cost^{\sigma_2}_{\tau_2}(\beta_2)$.  By taking the minimum cost
  over all these choices we get the desired value
  $\cost^\sigma_\tau(\beta)$ in $O(\ell)$ time.

  If we consider restricted cost functions, it may happen that the
  vertex $v$ has degree~2.  Then we need to enforce an angle of
  $90^\circ$ there.  Obviously, this constraint can be easily added to
  the described algorithm.
\end{proof}

\begin{lemma}
  \label{lem:general-parallel-composition}
  If the (restricted) cost functions of two $st$-graphs are $\infty$
  for bend numbers larger than $\ell$, the (restricted) cost functions
  of their parallel composition can be computed in $O(\ell)$ time.
\end{lemma}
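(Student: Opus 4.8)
The plan is to mimic the structure of the proof of Lemma~\ref{lem:general-series-composition}, but exploit the fact that a parallel composition offers far fewer degrees of freedom once the embedding and the distribution of occupied incidences at the two poles are fixed. Let $G_1$ and $G_2$ be the two $st$-graphs with poles $s_1,t_1$ and $s_2,t_2$, and let $G$ be their parallel composition with poles $s$ and $t$ obtained by identifying $s_1$ with $s_2$ and $t_1$ with $t_2$. As before, I would fix one of the constantly many valid combinations $(\sigma,\tau)$ for $G$, and compute $\cost^\sigma_\tau(\cdot)$ separately; since $\sigma+\tau\le 8$ and the two subgraphs partition the incidences at each pole, there are only $O(1)$ ways to split $\sigma=\sigma_1+\sigma_2$ and $\tau=\tau_1+\tau_2$ among $G_1$ and $G_2$. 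For each such split, $G_1$ sits in one of the two faces bounded by the two pertinent graphs and $G_2$ in the other; up to the global flip that the bend-number convention already absorbs, this fixes the cyclic arrangement.

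First I would observe that, once $(\sigma_1,\tau_1)$ and $(\sigma_2,\tau_2)$ are fixed, the shape of $G$ is essentially determined by a single integer. Concretely, the inner face $f$ of $G$ between the two subgraphs is bounded by one of the two $s$-$t$ paths of $G_1$ and one of the two $s$-$t$ paths of $G_2$; applying property~\eqref{prop:1-extended} (sum of rotations in a face equals $4$) to $f$, together with property~\eqref{prop:3-extended} at $s$ and $t$ to pin down the rotations $\rot(s_f)$ and $\rot(t_f)$ from $\sigma_i,\tau_i$, forces a linear relation between $\rot(\pi_1)$ and $\rot(\pi_2)$ of the two bounding paths. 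Hence choosing the number of bends $\beta_1$ of $\mathcal R_1$ (and which of its two outer paths realizes it) leaves no freedom for $\beta_2$: it is determined, and so is the resulting number of bends $\beta$ of $G$. For each of the $O(\ell)$ admissible values of $\beta_1$ and each of the $O(1)$ discrete choices (the split, which path of each $G_i$ bounds $f$, the flip), I would look up $\cost^{\sigma_1}_{\tau_1}(\beta_1)+\cost^{\sigma_2}_{\tau_2}(\beta_2)$ in $O(1)$ time, and assign this value to the slot $\cost^\sigma_\tau(\beta)$ of the output, keeping the minimum whenever several choices land on the same $\beta$. This sweeps all of $O(\ell)$ combinations in $O(\ell)$ total time, and since there are only constantly many $(\sigma,\tau)$ to handle, the whole computation is $O(\ell)$.

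For the restricted case, the only complication is that the common poles $s$ or $t$ may be required to form a $90^\circ$ angle (because one of the blocks meeting there has degree~2 at that vertex): this simply removes some of the admissible incidence splits and rotation choices, a constraint trivially folded into the enumeration without changing the asymptotics. I expect the main obstacle to be purely bookkeeping: carefully arguing that the relation forced by properties~\eqref{prop:1-extended}--\eqref{prop:4-extended} on the inner face really does leave only one free integer parameter, i.e. that no hidden embedding choice remains once the incidence split at each pole is fixed — here the degree-4 restriction is essential, since it caps the number of parallel subgraphs at the poles and hence the number of incidence splits at~$O(1)$. Once that rigidity is established, the running-time bound is immediate, and correctness follows from the substitution lemma (\cite[Lemma~5]{bkrw-ogdfc-12}) exactly as in the series case.
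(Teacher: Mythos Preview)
Your proposal is correct and follows essentially the same strategy as the paper: enumerate the $O(1)$ discrete choices (incidence splits $\sigma_i,\tau_i$, order of the two subgraphs, which outer path realises the maximum), observe that a single integer parameter then pins down everything, and sweep over its $O(\ell)$ values. The only cosmetic difference is the choice of free parameter: you iterate over $\beta_1$ and derive $\beta_2$ and $\beta$ from the inner-face rotation constraint, whereas the paper iterates over $\beta$ directly, using the observation that the two outer paths of $G$ are exactly one outer path of $G_1$ and one of $G_2$, so $\beta$ coincides with (one of) $\beta_1$ or $\beta_2$ and the remaining bend number is then forced. Both viewpoints yield the same $O(\ell)$ bound; your face-sum argument is arguably the more explicit justification of why no freedom remains.
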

\begin{proof}
  If the composition $G$ of $G_1$ and $G_2$ has $\beta$ bends, either
  the graph $G_1$ or the graph $G_2$ also has $\beta$ bends.  Thus,
  the cost function of $G$ is $\infty$ for bend numbers larger than
  $\ell$.  Let the number of bends of $G$ be fixed to $\beta$.
  Similar to the proof of Lemma~\ref{lem:general-series-composition},
  there are the following parameters.  The number of bends $\beta_1$
  and $\beta_2$ of $G_1$ and $G_2$; $\sigma_i$ and $\tau_i$ for $i \in
  \{1, 2\}$; $\sigma$ and $\tau$; the order of the two graphs; and the
  decision whether $\pi(s, t)$ or $\pi(t, s)$ determines the number of
  bends of $G$.  All parameters except for $\beta_1$ and $\beta_2$
  have $O(1)$ possible values.  As mentioned before, we have $\beta =
  \beta_1$ or $\beta = \beta_2$.  In the former case, fixing all
  parameters except for $\beta_2$ leaves no choice for $\beta_2$.  The
  case of $\beta = \beta_2$ leaves no choice for $\beta_1$.  Thus,
  each of the $O(\ell)$ values can be computed in $O(1)$ time, which
  concludes the proof.
\end{proof}

\begin{theorem}
  \label{thm:sp}
  For series-parallel graphs with monotone cost functions {\sc
    OptimalFlexDraw} can be solved in $O(n^4)$ time.
\end{theorem}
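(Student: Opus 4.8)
The plan is to combine the general framework of Theorem~\ref{thm:general-algo} with the composition algorithms just established. Concretely, for a series-parallel graph $G$ there are no R-nodes in its SPQR-tree, so $T_R(n)$ (and $T_R^r(n)$) does not arise; the only composition types are series and parallel, handled by Lemma~\ref{lem:general-series-composition} and Lemma~\ref{lem:general-parallel-composition}. Thus the total running time reduces to $O(n \cdot (n T_S^r + n T_P^r))$, and it remains to bound $T_S^r$ and $T_P^r$ by plugging in a bound $\ell$ on the maximum number of bends that any $st$-subgraph needs to have in an optimal solution.

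The key observation is that monotone cost functions guarantee that bends never help beyond a linear threshold. First I would argue that in an optimal orthogonal representation of $G$, every $st$-subgraph arising in the decomposition can be assumed to have $O(n)$ bends: by Biedl and Kant~\cite{bk-bhogd-98} every 4-planar graph has an orthogonal drawing with at most two bends per edge, hence a drawing with $O(n)$ total bends; since all cost functions are monotone, this bounded-bend drawing is no more expensive than any drawing that has more bends on some edge, so an optimal solution with $O(n)$ total bends exists. Consequently every subgraph's orthogonal representation in this optimal solution has $O(n)$ bends, so we may set $\ell = O(n)$ when applying the two composition lemmas. This yields $T_S^r = O(n^2)$ (from Lemma~\ref{lem:general-series-composition}) and $T_P^r = O(n)$ (from Lemma~\ref{lem:general-parallel-composition}).

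Substituting these bounds into Theorem~\ref{thm:general-algo}: the running time is $O(n \cdot (n \cdot T_S^r + n \cdot T_P^r + T_R^r(n)))$ with $T_R^r(n) = 0$, which gives $O(n \cdot (n \cdot n^2 + n \cdot n)) = O(n^4)$. I would also need to check that when we build cost functions bottom-up in the SPQR-tree, the intermediate cost functions indeed remain $\infty$ past $\ell = O(n)$ bends — this follows because a composition of graphs each with $O(n)$ bends again has $O(n)$ bends (as noted in the proofs of the two lemmas: for series composition the bend numbers roughly add, for parallel composition the bend number equals one of the two), and there are only $O(n)$ compositions total, so no blow-up occurs.

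The main obstacle is the bookkeeping in the first step: making precise that ``monotone cost functions plus a globally bounded-bend drawing'' implies we never lose optimality by capping each subgraph's bend count at $O(n)$, and that this cap propagates correctly through S-node and P-node processing (where an S-node may compose many children in sequence). One must ensure the linear bound is on the number of bends of each \emph{$st$-graph in the decomposition}, not merely on each edge, and argue this via the additive behaviour of rotations along $\pi(s,t)$ together with the global $O(n)$ bound. Once this is in place, the running time is a direct substitution and the theorem follows.
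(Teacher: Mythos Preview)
Your overall plan matches the paper's exactly: invoke Theorem~\ref{thm:general-algo}, note that series-parallel graphs have no R-nodes, and plug in Lemmas~\ref{lem:general-series-composition} and~\ref{lem:general-parallel-composition} with $\ell = O(n)$ to get $T_S^r \in O(n^2)$ and $T_P^r \in O(n)$, hence $O(n^4)$ total.

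The gap is in justifying $\ell = O(n)$. Your argument via Biedl--Kant only exhibits \emph{some} drawing with $O(n)$ total bends; it does not show that an \emph{optimal} drawing has every split component bounded to $O(n)$ bends. The sentence ``this bounded-bend drawing is no more expensive than any drawing that has more bends on some edge'' is false: another drawing may put more bends on a cheap edge and fewer on an expensive one and be strictly cheaper, so monotonicity of each $\cost_e$ gives you nothing here. Your ``no blow-up'' remark is also off target: series composition can add the children's bend counts, so iterated S-compositions do blow up the support of the cost function, and in any case what you need is a bound on the bends of each split component \emph{in an optimal representation of $G$}, not an inductive bound on the supports of the intermediate cost functions.

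The paper closes this gap with a cycle-cancelling argument in Tamassia's flow network. View an optimal orthogonal representation as a feasible flow in $N$, whose total sink demand is some $d \in O(n)$. If a split component $H$ has at least $d+1$ bends, then one of the two faces incident to both $H$ and $G-H$ carries at least $d+1$ units of outgoing flow; since total demand is only $d$, the flow contains a directed cycle on which every arc carries at least one unit. Reducing the flow along that cycle by~$1$ yields another valid representation in which no edge's bend count increases, so by monotonicity the cost does not increase, while the total flow strictly decreases. Iterating terminates in an optimal representation where every split component has at most $d \in O(n)$ bends, which is precisely the cap you need to apply the two composition lemmas with $\ell = O(n)$.
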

\begin{proof}
  To solve {\sc OptimalFlexDraw}, we use
  Theorem~\ref{thm:general-algo}.  As the graphs we consider here are
  series parallel, it suffices to give algorithms that compute the
  cost functions of series and parallel compositions.  Applying
  Lemma~\ref{lem:general-series-composition} and
  Lemma~\ref{lem:general-parallel-composition} gives us running times
  $T_S \in O(\ell^2)$ and $T_P \in O(\ell)$ for these compositions.
  In the following, we show that it suffices to compute the cost
  functions for a linear number of bends, leading to running times
  $T_s \in O(n^2)$ and $T_P \in O(n)$.  Together with the time stated
  by Theorem~\ref{thm:general-algo}, this gives us a total running
  time of $O(n^4)$.

  Let $G$ be an $st$-graph with monotone cost functions assigned to
  the edges.  We show the existence of an optimal orthogonal
  representation of $G$ such that every split component of $G$ has
  $O(n)$ bends.  To this end, consider the flow network $N$ introduced
  by Tamassia~\cite{t-eggmb-87} and let $d$ be the total demand of all
  its sinks.  Let $\mathcal R$ be an optimal orthogonal representation
  of $G$ such that a split component $H$ has at least $d + 1$ bends.
  Then one of the two faces incident to edges of $H$ and to edges of
  $G - H$ has at least $d + 1$ units of outgoing flow.  As the total
  demand of sinks in the flow network is only $d$, there must exist a
  directed cycle $C$ in $N$ such that the flow on each of the arcs in
  $C$ is at least~1.  Reducing the flow on $C$ by~1 yields a new
  orthogonal representation and as the number of bends on no edge is
  increased, the cost does not increase.  As in every step the total
  amount of flow is decreased, the process stops after finitely many
  steps.  The result is an optimal orthogonal representation of $G$
  such that each split component has at most $d$ bends.  Thus, we can
  restrict our search to orthogonal representations in which each
  split component has only up to $d$ bends.  This can be done by
  implicitly setting the costs to $\infty$ for larger values than $d$.
  This concludes the proof, as $d \in O(n)$ holds.
\end{proof}

\section{An FPT-Algorithm for General Graphs}
\label{sec:fpt-algorithm}

Let $G$ be an instance of {\sc FlexDraw}.  We call an edge in $G$
\emph{critical} if it is inflexible and at least one of its endpoints
has degree~4.  We call the instance $G$ of {\sc FlexDraw}
$k$-critical, if it contains exactly $k$ critical edges.  An
inflexible edge that is not critical is \emph{semi-critical}.  The
poles $s$ and $t$ of an $st$-graph~$G$ are considered to have
additional neighbors (which comes from the fact that we usually
consider $st$-graphs to be subgraphs of larger graphs).  More
precisely, inflexible edges incident to the pole~$s$ (or $t$) are
already \emph{critical} if $\deg(s) \ge 2$ (or $\deg(t) \ge 2$).
In the following, we first study cost functions of $k$-critical
$st$-graphs.  Afterwards, we show how to use the insights we got to
give an FPT-algorithm for $k$-critical instances of {\sc FlexDraw}.

\subsection{The Cost Functions of $k$-Critical Instances}
\label{sec:cost-functions-k}

%
% \begin{figure}[tb]
%   \centering
%   \includegraphics[page=1]{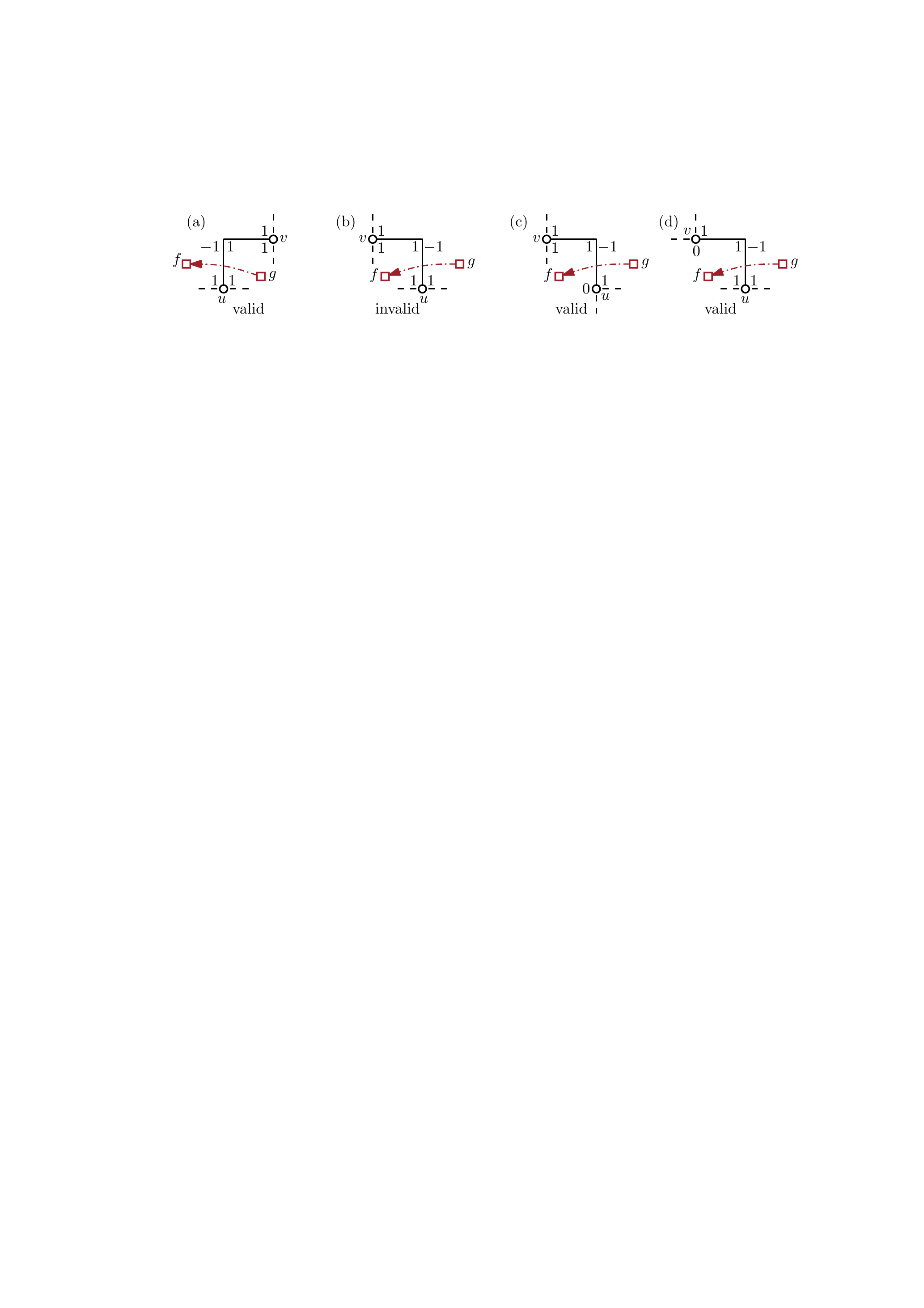}
%   \caption{The edge $e = uv$ has flexibility~1.  (a)~$\rot(e_f) <
%     \flex(e)$, thus $(f, g)$ is valid; (b)~$(f, g)$ is invalid as
%     $\rot(e_f) = \flex(e)$ and $\rot(u_f) = \rot(v_g) = 1$;
%     (c)~$\rot(u_f) = 0$, thus $(g, f)$ is valid; (d)~$\rot(v_g) = 0$,
%     thus $(g, f)$ is valid.}
%   \label{fig:dual-graph-valid-edges}
% \end{figure}
%
\begin{figure}[tb]
  \centering
  \includegraphics[page=2]{}
  \caption{An orthogonal representation (the bold edge is inflexible,
    other edges have flexibility~1), together with a valid cycle
    (dashed).  Bending along this cycle increases the green and
    decreases the red angles.  The resulting orthogonal representation
    is shown on the right.}
  \label{fig:bending-along-cycle}
\end{figure}
Let $G$ be an $st$-graph and let $\mathcal R$ be a valid orthogonal
representation of $G$.  We define an operation that transforms
$\mathcal R$ into another valid orthogonal representation of $G$.  Let
$G^\star$ be the \emph{double directed} dual graph of $G$, that is
each edge $e$ of $G$ with incident faces $g$ and $f$ corresponds to
the two dual edges $(g, f)$ and $(f, g)$.  We call a dual edge
$e^\star = (g, f)$ of $e$ \emph{valid} if one of the following
conditions holds.
\begin{compactenum}[(I)]
\item \label{itm:valid-1}$\rot(e_f) < \flex(e)$ (which is equivalent
  to $-\rot(e_g) < \flex(e)$).
\item \label{itm:valid-2}$\rot(v_f) < 1$ where $v$ is an endpoint of
  $e$ but not a pole.
\end{compactenum}
A simple directed cycle $C^\star$ in $G^\star$ consisting of valid
edges is called \emph{valid cycle}.  Then \emph{bending along
  $C^\star$} changes the orthogonal representation $\mathcal R$ as
follows; see Fig.~\ref{fig:bending-along-cycle}.  Let $e^\star = (g,
f)$ be an edge in $C^\star$ with primal edge $e$.  If $e^\star$ is
valid due to Condition~\eqref{itm:valid-1}, we reduce $\rot(e_g)$ by~1
and increase $\rot(e_f)$ by~1.  Otherwise, if
Condition~\eqref{itm:valid-2} holds, we reduce $\rot(v_g)$ by~1 and
increase $\rot(v_f)$ by~1, where $v$ is the vertex incident to $e$
with $\rot(v_f) < 1$.

\begin{lemma}
  Let $G$ be an $st$-graph with a valid $(\sigma, \tau)$-orthogonal
  representation~$\mathcal R$.  Bending along a valid cycle $C^\star$
  yields a valid $(\sigma, \tau)$-orthogonal representation.
\end{lemma}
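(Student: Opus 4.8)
The plan is to verify that bending along a valid cycle $C^\star$ preserves each of the four defining properties of a $(\sigma,\tau)$-orthogonal representation, and additionally that it preserves validity (i.e., no flexibility constraint is violated) and the number of occupied incidences at the poles. First I would fix notation: for each dual edge $e^\star=(g,f)$ in $C^\star$ the operation transfers one unit of rotation from the face $g$ to the face $f$, either across the primal edge $e$ (Condition~\eqref{itm:valid-1}) or around one non-pole endpoint $v$ of $e$ (Condition~\eqref{itm:valid-2}). The key bookkeeping observation is that for each arc of $C^\star$ the net change to the total rotation of $g$ is $-1$ and to the total rotation of $f$ is $+1$, and every change happens at an incidence lying on the common boundary of $g$ and $f$.

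For property~\eqref{prop:1-extended} (rotations in a face sum to $4$, or $-4$ for the outer face): since $C^\star$ is a simple directed cycle, every face that is a tail of some arc of $C^\star$ is also the head of exactly one arc of $C^\star$ (and vice versa), so each such face loses exactly one unit and gains exactly one unit, leaving the sum unchanged; faces not touched by $C^\star$ are unaffected. In particular the outer face's sum stays $-4$. For property~\eqref{prop:2-extended} (for a $(\sigma,\tau)$-edge $e$, $\rot(e_{f_\ell})+\rot(e_{f_r})=2-(\sigma+\tau)$): an edge rotation is only ever modified under Condition~\eqref{itm:valid-1}, in which case we decrease $\rot(e_g)$ by $1$ and increase $\rot(e_f)$ by $1$ simultaneously, so the sum over the two incident faces is invariant. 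For property~\eqref{prop:3-extended} (rotations around a vertex $v$ sum to $\sum(\sigma_i+1)-4$): a vertex rotation is only ever modified under Condition~\eqref{itm:valid-2}, and again we decrease $\rot(v_g)$ and increase $\rot(v_f)$ by the same amount, so the sum around $v$ is unchanged. This also shows the occupied-incidence count at a non-pole vertex is unaffected; for the poles $s$ and $t$, note that by construction $C^\star$ only ever modifies vertex rotations at \emph{non-pole} endpoints, so $\rot(s_{f_{\mathrm{out}}})$ and $\rot(t_{f_{\mathrm{out}}})$ are untouched and $\sigma,\tau$ are preserved. For property~\eqref{prop:4-extended} together with validity: we must check that after the operation, edge rotations still satisfy $\rot(e_f)\le\flex(e)$ (equivalently $\ge-\flex(e)$) and vertex rotations still lie in $[-2,1]$. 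This is exactly where the two validity conditions on dual edges are used: when we increase $\rot(e_f)$ by $1$ we invoked Condition~\eqref{itm:valid-1}, i.e.\ $\rot(e_f)<\flex(e)$ beforehand, so afterwards $\rot(e_f)\le\flex(e)$; symmetrically the decrease of $\rot(e_g)$ keeps it $\ge-\flex(e)$. When we increase $\rot(v_f)$ by $1$ we invoked Condition~\eqref{itm:valid-2}, i.e.\ $\rot(v_f)<1$ beforehand, so afterwards $\rot(v_f)\le1$; the decrease of $\rot(v_g)$ keeps it $\ge-2$ since it was at most $1$ before, hence at most $0$ after — wait, we need $\ge-2$, which follows because it was $\ge-2$ before and we only changed it by $-1$ from a value that was $\le 1$, so it stays in range only if it was $\ge-1$; here the subtlety is that a single incidence of $v$ could be hit by several arcs of $C^\star$.

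The main obstacle is precisely this last point: a simple \emph{cycle} $C^\star$ in the double directed dual may traverse several arcs all incident to the same primal vertex $v$ or the same primal edge $e$, so the rotation at a single incidence could in principle be changed by more than one unit, potentially leaving the $[-2,1]$ range or violating a flexibility bound. I would handle this by arguing that the relevant incidences are changed by at most $\pm1$ in total: for an edge $e$ with faces $f_\ell,f_r$, the only arcs of $C^\star$ affecting $e$'s rotations are $(f_\ell,f_r)$ and $(f_r,f_\ell)$, and a simple cycle cannot use both (using both would either repeat a vertex or form a $2$-cycle, which is not simple), so $e$ is hit at most once. For a non-pole vertex $v$ of degree $d$, its incidences to faces form a cyclic sequence $f_1,\dots,f_d$ (with multiplicity if $G$ is not biconnected, but here we may assume biconnectedness so each face appears once), and the arcs of $C^\star$ that act at $v$ are dual edges between \emph{consecutive} faces $f_i,f_{i+1}$ around $v$; since $C^\star$ is simple it visits each face $f_i$ at most once, hence enters and leaves the "star" of $v$ in a controlled way, and a careful local analysis shows the rotation at each incidence $v_{f_i}$ changes by at most one unit — an incidence $v_{f_i}$ gains at most $1$ (from an incoming arc across $v$) and loses at most $1$ (from an outgoing arc across $v$), and if it does both the net change is $0$. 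I would write this up as a short lemma about the local structure of simple cycles in $G^\star$ restricted to the star of a vertex (or the pair of faces around an edge), and then the range and validity checks become immediate. Altogether, all four properties plus validity and the values $\sigma,\tau$ are preserved, so the result is again a valid $(\sigma,\tau)$-orthogonal representation, as claimed.
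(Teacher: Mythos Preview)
Your proposal is correct and follows the same approach as the paper (verify the four defining properties of an orthogonal representation, check that the flexibility bounds remain satisfied, and observe that the pole rotations---hence $\sigma,\tau$---are untouched because Condition~\eqref{itm:valid-2} excludes poles); you are in fact more careful than the paper about the possibility of several arcs of $C^\star$ affecting the same incidence, which the paper glosses over by simply asserting that after a Condition~\eqref{itm:valid-2} step ``both rotations remain in the interval $[-1,1]$'' and treating the arcs one at a time. The one point you flag but leave slightly implicit is why the decremented rotation $\rot(v_g)$ stays $\ge -2$: since $G+st$ is biconnected, every non-pole vertex $v$ has degree $d\ge 2$, and then its $d$ rotations, each $\le 1$ and summing to $2d-4$, must each be $\ge d-3\ge -1$; combined with your observation that simplicity of $C^\star$ limits the net change at any incidence to $\{-1,0,+1\}$, this closes the argument.
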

\begin{proof}
  First, we show that the resulting rotations still describe an
  orthogonal representation.  Afterwards, we show that this orthogonal
  representation is also valid and that it is a $(\sigma,
  \tau)$-orthogonal representation.  Let $e^\star = (g, f)$ be an edge
  in $C^\star$ with primal edge $e$.  If Condition~\eqref{itm:valid-1}
  holds, then $\rot(e_g)$ is reduced by~1 and $\rot(e_f)$ is increased
  by~1 and thus $\rot(e_g) = -\rot(e_f)$ remains true.  Otherwise,
  Condition~\eqref{itm:valid-2} holds and thus $\rot(v_g)$ is reduced
  by~1 and $\rot(v_f)$ is increased by~1.  The total rotation around
  $v$ does obviously not change.  Moreover, both rotations remain in
  the interval $[-1, 1]$.  Finally, the incoming arc to a face $f$ in
  $C^\star$ increases the rotation around $f$ by~1 and the outgoing
  arc decreases it by~1.  Thus, the total rotation around each face
  remains as it was.

  It remains to show that the resulting orthogonal representation is a
  valid $(\sigma, \tau)$-orthogonal representation.  First,
  Condition~\eqref{itm:valid-1} ensures that we never increase the
  number of bends of an edge $e$ above $\flex(e)$.  Moreover, due to
  the exception in Condition~\eqref{itm:valid-2} where $v$ is one of
  the poles, we never change the rotation of one of the poles.  Thus
  the number of free incidences to the outer face are not changed.
\end{proof}

As mentioned in Section~\ref{sec:general-algorithm}, depending on
$\sigma$ and $\tau$, there is a lower bound on the number of bends of
$(\sigma, \tau)$-orthogonal representations.  We denote this lower
bound by $\beta_\low$; see Fig.~\ref{fig:bends-lower-bound}.

\begin{fact}
  \label{fact:lower-bound}
  A $(\sigma, \tau)$-orthogonal representation has at least
  $\displaystyle \beta_{\low} = \left\lceil\frac{\sigma +
      \tau}{2}\right\rceil - 1$~bends.  
\end{fact}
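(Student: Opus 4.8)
The plan is to bound the number of bends from below by tracking how many incidences at the poles are occupied and how much ``rotation'' must therefore be absorbed by the two boundary paths $\pi(s,t)$ and $\pi(t,s)$. First I would recall that by definition $s$ and $t$ have $\sigma$ and $\tau$ occupied incidences, respectively, so by the definition of occupied incidences $\rot(s_{f_0}) = \sigma - 3$ and $\rot(t_{f_0}) = \tau - 3$, where $f_0$ is the outer face. Applying Property~\eqref{prop:1} to the outer face, the rotations along the outer boundary sum to $-4$; splitting the outer boundary into the two paths $\pi(s,t)$ and $\pi(t,s)$ and the two poles, we get
\[
  \rot(\pi(s,t)) + \rot(\pi(t,s)) + \rot(s_{f_0}) + \rot(t_{f_0}) = -4,
\]
hence $\rot(\pi(s,t)) + \rot(\pi(t,s)) = -4 - (\sigma - 3) - (\tau - 3) = 2 - (\sigma + \tau)$. (This is exactly the extended Property~\eqref{prop:2-extended} applied to the parent edge, which is a reassuring sanity check.)

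The second step is to turn this identity into a lower bound on $\max\{|\rot(\pi(s,t))|,|\rot(\pi(t,s))|\}$, which is by definition the number of bends $\beta$ of $\mathcal R$. Since the two rotations sum to $2 - (\sigma+\tau)$, which is $\le 0$ whenever $\sigma + \tau \ge 2$ (always true), at least one of them is at most $(2-(\sigma+\tau))/2 = 1 - (\sigma+\tau)/2$, so its absolute value is at least $(\sigma+\tau)/2 - 1$. Because the rotation of a path is an integer, this absolute value is in fact at least $\lceil (\sigma+\tau)/2 \rceil - 1$, and therefore $\beta \ge \lceil (\sigma+\tau)/2\rceil - 1 = \beta_\low$.

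I expect the only real subtlety to be bookkeeping at the poles: one must make sure that when the outer face is split into $\pi(s,t)$, $\pi(t,s)$, and the two pole contributions, no incidence is double-counted and the ``occupied incidences'' convention $\rot(s_{f_0}) = \sigma - 3$ is applied with the correct sign (it is stated in the paper, so I would simply cite it). A second, very minor point is the ceiling: one should note that $\rot(\pi(s,t))$ and $\rot(\pi(t,s))$ are integers whose sum is the fixed integer $2 - (\sigma+\tau)$, so the larger of their absolute values is at least $\lceil (\sigma + \tau - 2)/2 \rceil = \lceil (\sigma+\tau)/2\rceil - 1$; this is where the ceiling in the statement comes from, and it is what makes the bound tight for odd $\sigma + \tau$ (e.g.\ a $(2,1)$-representation needs a bend, a $(2,3)$-representation needs two). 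No deeper obstacle is anticipated; the statement is essentially a one-line consequence of the rotation identity on the outer face together with integrality.
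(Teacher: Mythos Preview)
Your proof is correct. The paper actually does not prove this statement at all: it is declared a \emph{Fact} and justified only by the illustrative Figure~\ref{fig:bends-lower-bound}. Your argument via the outer-face rotation identity $\rot(\pi(s,t)) + \rot(\pi(t,s)) + (\sigma-3) + (\tau-3) = -4$ is exactly the right one, and indeed the paper uses this same identity later (in the proof of Lemma~\ref{lem:existence-of-cycle}), so your approach is fully in the spirit of the paper. The integrality step and the ceiling are handled correctly; nothing needs to be added.
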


\begin{figure}[tb]
  \centering
  \includegraphics[page=1]{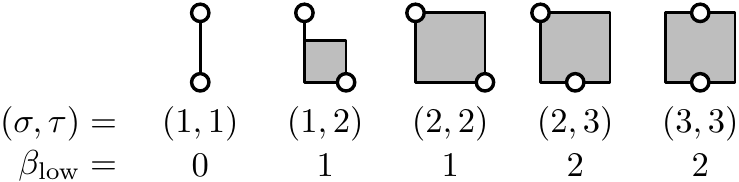}
  \caption{Illustration of Fact~\ref{fact:lower-bound} for some values
    of $\sigma$ and $\tau$.}
  \label{fig:bends-lower-bound}
\end{figure}

For a valid orthogonal representation with a large number of bends,
the following lemma states that we can reduce its bends by bending
along a valid cycle.  This can later be used to show that the cost
function of an $st$-graph is~0 on a significantly large interval.  Or
in other words, arbitrary alterations of cost~0 and cost~$\infty$ that
are hard to handle only occur on a small interval (depending on $k$).
The lemma and its proof are a generalization of Lemma~1
from~\cite{bkrw-ogdfc-12} that incorporates inflexible edges.  For
$\sigma = \tau = 3$ a slightly weaker result holds.

\begin{lemma}
  \label{lem:existence-of-cycle}
  Let $G$ be a $k$-critical $st$-graph and let $\mathcal R$ be a
  valid $(\sigma, \tau)$-orthogonal representation with $\sigma + \tau
  \le 5$.  If $-\rot(\pi(t, s)) \ge \beta_{\low} + k + 1$ holds, then
  there exists a valid cycle $C^\star$ such that bending $\mathcal R$
  along $C^\star$ reduces $-\rot(\pi(t, s))$ by~1.
\end{lemma}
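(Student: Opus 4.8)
The goal is to find a valid cycle $C^\star$ in the double-directed dual $G^\star$ along which bending reduces $-\rot(\pi(t,s))$ by exactly $1$. The natural strategy is to identify two faces: the outer face $f_{\mathrm{out}}$ and an auxiliary structure that captures ``where the bends of $\pi(t,s)$ live'', and then to argue that a valid path between them must exist because otherwise the orthogonal representation would violate the face-rotation equation \eqref{prop:1-extended}. Concretely, I would look at the outer face $f_{\mathrm{out}}$ and consider which dual arcs leaving it are valid: an arc $(f_{\mathrm{out}}, f)$ crossing an edge $e$ of $\pi(t,s)$ is valid by Condition~\eqref{itm:valid-1} precisely when $e$ is not bent to its flexibility bound in $f_{\mathrm{out}}$, i.e.\ when $\rot(e_{f_{\mathrm{out}}}) < \flex(e)$; and crossing is valid by Condition~\eqref{itm:valid-2} at an internal vertex $v$ of $\pi(t,s)$ when $\rot(v_{f_{\mathrm{out}}}) < 1$. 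Bending along a cycle through such an arc and back to $f_{\mathrm{out}}$ decreases $\rot(\pi(t,s))$ (equivalently increases $-\rot(\pi(t,s))$)? — I need the opposite sign, so the cycle must be oriented to pump a unit of rotation \emph{into} $f_{\mathrm{out}}$ along $\pi(t,s)$; I would set up the orientation so that the arc entering $f_{\mathrm{out}}$ crosses $\pi(t,s)$ and the arc leaving it crosses $\pi(s,t)$, thereby shifting one right-bend from the $\pi(t,s)$ side to the $\pi(s,t)$ side.

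The key counting step is this: since $-\rot(\pi(t,s)) \ge \beta_{\low} + k + 1$ and, by property~\eqref{prop:1-extended} applied to the outer face together with Fact~\ref{fact:lower-bound}, the contributions of $\pi(s,t)$ and of the poles are bounded below, the path $\pi(t,s)$ must contain ``a lot'' of negative rotation — enough that, after discounting the at most $k$ critical edges (which may be fully saturated at flexibility $0$ and hence block Condition~\eqref{itm:valid-1}), there still remains at least one edge or internal vertex on $\pi(t,s)$ through which the dual arc into $f_{\mathrm{out}}$ is valid. This is where the hypothesis $\sigma+\tau\le 5$ enters: it controls how much rotation the poles can absorb in the outer face, so that the bound $\beta_{\low}+k+1$ genuinely forces a valid crossing rather than being consumed by pole incidences. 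I would make this precise by summing $\rot$ over the outer face boundary and isolating the term $\rot(\pi(t,s))$.

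Having found one valid dual arc $a = (f_{\mathrm{out}}, f)$ (or its reverse) incident to $f_{\mathrm{out}}$ across $\pi(t,s)$, I would then grow a valid path in $G^\star$ starting from $f$ and show it must return to $f_{\mathrm{out}}$, closing up to a valid cycle $C^\star$. The standard argument: consider the set $\mathcal F$ of faces reachable from $f$ by valid arcs; if $f_{\mathrm{out}} \notin \mathcal F$, then $\mathcal F$ together with the non-valid arcs on its boundary yields a contradiction with the rotation sums — essentially every face in $\mathcal F$ has total rotation $4$ (or the right value by \eqref{prop:1-extended}), the boundary arcs all point \emph{inward} (since outward ones would be valid and extend $\mathcal F$), and summing gives a rotation imbalance that cannot be sustained; this mirrors the reachability argument in Lemma~1 of~\cite{bkrw-ogdfc-12}. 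Finally I must check that bending along the resulting $C^\star$ changes $-\rot(\pi(t,s))$ by exactly $1$ and not more: only the two arcs of $C^\star$ incident to $f_{\mathrm{out}}$ touch the outer boundary, and by construction exactly one of them crosses $\pi(t,s)$ while the other crosses $\pi(s,t)$ or a pole incidence, so the net effect on $\rot(\pi(t,s))$ is $\pm 1$; the previous lemma guarantees the result is still a valid $(\sigma,\tau)$-orthogonal representation.

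\textbf{Main obstacle.} The delicate part is the counting argument showing that the $k$ critical edges cannot block \emph{all} potential valid crossings of $\pi(t,s)$ into the outer face: I must carefully account for saturated semi-critical and critical edges, for internal vertices of $\pi(t,s)$ with $\rot(v_{f_{\mathrm{out}}}) = 1$, and for the contribution of the poles under the constraint $\sigma+\tau\le 5$, and show their combined ``blocking capacity'' is at most $\beta_{\low}+k$, strictly less than $-\rot(\pi(t,s))$. Getting the constant exactly right — so that $\beta_{\low}+k+1$ and not something larger suffices — is where the proof will need the most care, and is presumably also why the case $\sigma=\tau=3$ (where $\sigma+\tau=6$) only admits a weaker statement.
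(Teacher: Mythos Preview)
Your plan has the right two-stage shape (a counting argument to locate a first valid dual arc incident to the outer face, then an extension to a full cycle), but the counting is aimed at the wrong side of the outer boundary, and this is a genuine gap.

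You propose to show that some valid arc enters $f_{\mathrm{out}}$ across $\pi(t,s)$. That crossing is essentially free and does not need the hypothesis $-\rot(\pi(t,s))\ge\beta_{\low}+k+1$: if no such arc were valid then every edge $e$ on $\pi(t,s)$ would satisfy $\rot(e_{f_{\mathrm{out}}})=\flex(e)\ge 0$ and every internal vertex $v$ would satisfy $\rot(v_{f_{\mathrm{out}}})=1$, forcing $\rot(\pi(t,s))\ge 0$, already a contradiction. The place where critical edges can actually block you is the \emph{other} crossing the cycle must make, namely the valid arc \emph{leaving} $f_{\mathrm{out}}$ across $\pi(s,t)$ (you yourself note the cycle must have one arc on each side). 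If no such arc exists, then every edge on $\pi(s,t)$ has $\rot(e_{f_{\mathrm{out}}})=-\flex(e)$ and every internal non-pole vertex has $\rot(v_{f_{\mathrm{out}}})=1$; summing and carefully handling semi-critical edges (which force adjacent vertices to have rotation $\le 0$) yields $\rot(\pi(s,t))\le k-1$ (or $\le k$ when $\deg(s)=\deg(t)=1$). Plugging this and $\rot(s_{f_{\mathrm{out}}})=\sigma-3$, $\rot(t_{f_{\mathrm{out}}})=\tau-3$ into the outer-face equation together with $-\rot(\pi(t,s))\ge\beta_{\low}+k+1$ gives $\lfloor(\sigma+\tau)/2\rfloor\ge 3$, contradicting $\sigma+\tau\le 5$. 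This is where the hypothesis, the $k$ critical edges, and the restriction $\sigma+\tau\le 5$ all interact; your ``blocking capacity'' analysis must live on $\pi(s,t)$, not on $\pi(t,s)$.

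For the second stage the paper does not use a reachability/cut argument. Having found a valid arc $e^\star=(f_{\mathrm{out}},f)$ across $\pi(s,t)$, it deletes the primal edge $e$, observes that the block of $G-e+st$ containing $s$ and $t$ is still a $k$-critical $st$-graph satisfying the same rotation hypothesis on $\pi(t,s)$, and applies induction on the number of edges to obtain a valid cycle in the smaller graph; this cycle lifts back to $G^\star$ either directly or by appending $e^\star$. A reachability argument as you sketch might be workable, but you would still need to feed it the bound on $\rot(\pi(s,t))$ above, and you would need to argue that the reachable region is bounded by a cycle separating $s$ from $t$ rather than an arbitrary cycle through $f_{\mathrm{out}}$; the inductive route sidesteps that issue.
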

\begin{proof}
  We show the existence of a valid cycle $C^\star$ such that $s$ and
  $t$ lie to the left and right of $C^\star$, respectively.
  Obviously, such a cycle must contain the outer face.  The edge in
  $C^\star$ having the outer face as target ensures that the rotation
  of an edge or a vertex of $\pi(t, s)$ is increased by~1 (which is
  the same as reducing $-\rot(\pi(t, s))$ by~1), where this vertex is
  neither $s$ nor $t$ (due to the exception of
  Condition~\eqref{itm:valid-2}).  Thus, $\rot(\pi(t, s))$ is
  increased by~1 when bending along $C^\star$ and thus $C^\star$ is
  the desired cycle.  We first show the following claim.

  \begin{nrclaim}
    \label{claim-1}
    There exists a valid edge $e^\star$ that either has the outer face
    as source and corresponds to a primal edge $e$ on the path $\pi(s,
    t)$, or is a loop with $s$ to its left and $t$ to right.
  \end{nrclaim}
  
  Assume the claimed edge $e^\star$ does not exist.  We first show
  that the following inequality follows from this assumption.
  Afterwards, we show that this leads to a contradiction to the
  inequality in the statement of the lemma.
  \begin{equation}
    \label{eq:rot-pi-st}
    \rot(\pi(s, t)) \le 
    \begin{cases}
      k, &\text{if } \deg(s) = \deg(t) = 1 \\
      k-1, &\text{otherwise}
    \end{cases}
  \end{equation}
  We first show this inequality for the case where we have \textbf{no
    critical and no semi-critical edges, in particular $k=0$}.  We
  consider the rotation of edges and vertices on $\pi(s, t)$ in the
  outer face $g$.  If an edge or vertex has two incidences to $g$, we
  implicitly consider the incidence corresponding to $\pi(s, t)$.
  Recall that the rotation along $\pi(s, t)$ is the sum over the
  rotations of its edges and of its internal vertices.  The rotation
  of every edge $e$ is $\rot(e_g) = -\flex(e)$ as otherwise $e^\star =
  (g, f)$ would be a valid edge due to Condition~\eqref{itm:valid-1}.
  At an internal vertex $v$ we obviously have $\rot(v_g) \le 1$, as
  larger rotations are not possible at vertices.  Hence, as the
  flexibility of every edge is at least~1 and we have an internal
  vertex less than we have edges, we get $\rot(\pi(s, t)) \le -1$ and
  thus Equation~\eqref{eq:rot-pi-st} is satisfied.

  Next, we allow \textbf{semi-critical edges, but no critical edges
    ($k = 0$ remains)}.  If $\pi(s, t)$ contains a semi-critical edge,
  it has a rotation of~0 (instead of $-1$ for normal edges).  Note
  that we still assume that there is no critical edge in~$\pi(s,t)$,
  i.e., $k=0$.  Moreover, if an internal vertex $v$ is incident to a
  semi-critical edge, it cannot have degree~4.  In this case, there
  must be a face incident to $v$ such that $v$ has rotation at most~0
  in this face.  If this face was not $g$,
  Condition~\eqref{itm:valid-2} would be satisfied.  Thus, $\rot(v_g)
  \le 0$ follows for this case.  Consider the decomposition of $\pi(s,
  t)$ into maximal subpaths consisting of semi-critical and normal
  edges.  If follows that each subpath consisting of semi-critical and
  normal edges hat rotation at most~0 and~$-1$, respectively.
  Moreover, the rotation at vertices between two subpaths is~0.
  Hence, if $\pi(s, t)$ contains at least one edge that is not
  semi-critical, we again get $\rot(\pi(s, t)) \le -1$ and thus
  Equation~\eqref{eq:rot-pi-st} is satisfied.  On the other hand, if
  $\pi(s, t)$ consists of semi-critical edges, we get the weaker
  inequality $\rot(\pi(s, t)) \le 0$.  If $\deg(s) = \deg(t) = 1$
  holds, Equation~\eqref{eq:rot-pi-st} is still satisfied as we have
  to show a weaker inequality in this case.  Otherwise, one of the
  poles has degree at least~2 and thus the edges incident to it cannot
  be semi-critical by definition.  Thus, the path $\pi(s, t)$ cannot
  consist of semi-critical edges.
  
  Finally, we allow \textbf{critical edges, i.e., $k \ge 0$}.  If
  $\pi(s, t)$ contains critical edges, we first consider these edges
  to have flexibility~1, leading to Equation~\eqref{eq:rot-pi-st} with
  $k = 0$.  Replacing an edge with flexibility~1 by an edge with
  flexibility~0 increases the rotation along $\pi(s, t)$ by at most~1.
  As $\pi(s, t)$ contains at most $k$ critical edges, $\rot(\pi(s,
  t))$ is increased by at most $k$ yielding
  Equation~\eqref{eq:rot-pi-st}.

  In the case that $\deg(s) = \deg(t) = 1$, the equation $\rot(\pi(s,
  t)) = -\rot(\pi(t, s))$ holds.  Equation~\eqref{eq:rot-pi-st}
  together with the inequality in the statement of the lemma leads to
  $k \ge \beta_\low + k + 1$, which is a contradiction.  In the
  following, we only consider the case where $\deg(s) = \deg(t) = 1$
  does not hold.  Since the total rotation around the outer face sums
  up to~$-4$, we get the following equation.
  \begin{equation*}
    \rot(\pi(s, t)) + \rot(\pi(t, s)) + \rot(s_g) + \rot(t_g) = -4
  \end{equation*}
  Recall that $\rot(s_g) = \sigma - 3$ and $\rot(t_g) = \tau - 3$.
  Using Equation~\eqref{eq:rot-pi-st} ($\deg(s) = \deg(t) = 1$ does
  not hold) and the inequality given in the lemmas precondition, we
  obtain the following.
  \begin{align}
    \nonumber&&\Big(k - 1\Big) - \Big(\overbrace{\left\lceil\frac{\sigma +
          \tau}{2}\right\rceil - 1}^{\beta_\low} + k + 1\Big) +
    \Big(\sigma - 3\Big) + \Big(\tau - 3\Big) &\ge -4 \\
    \nonumber&\Leftrightarrow\hspace{-1em}& -\left\lceil\frac{\sigma +
        \tau}{2}\right\rceil + (\sigma + \tau) &\ge 3 \\
    \label{eq:contradiction}
    &\Leftrightarrow\hspace{-1em}& \left\lfloor\frac{\sigma +
        \tau}{2}\right\rfloor &\ge 3
  \end{align}
  Recall that $\sigma + \tau \le 5$ is a requirement of the lemma.
  Thus, Equation~\eqref{eq:contradiction} is a contradiction, which
  concludes the proof of Claim~\ref{claim-1}.
  
  \begin{nrclaim}
    \label{claim-2}
    The valid cycle $C^\star$ exists.
  \end{nrclaim}

  Let $e^\star$ be the valid edge existing due to
  Claim~\ref{claim-1}.  If $e^\star$ is a loop with $s$ to its left
  and $t$ to its right, then $C^\star = e^\star$ is the desired valid
  cycle.  This case will serve as base case for a structural
  induction.

  Let $e^\star = (g, f)$ be a valid edge dual to $e$ having the outer
  face $g$ as source.  As $e^\star$ is not a loop, the graph $G - e$
  is still connected and thus $s$ and $t$ are contained in the same
  block of the graph $G - e + st$.  Let $H$ be this block (without
  $st$) and let $\mathcal S$ be the orthogonal representation of $H$
  induced by $\mathcal R$.  Then $H$ is a $k$-critical $st$-graph, as
  $H$ is a subgraph of $G$ and $H + st$ is biconnected.  Moreover, the
  path $\pi(t, s)$ is completely contained in $H$ and thus its
  rotation does not change.  Hence, all conditions for
  Lemma~\ref{lem:existence-of-cycle} are satisfied and since $H$
  contains fewer edges than $G$, we know by induction that there exists
  a valid cycle $C^\star_H$ such that bending $\mathcal S$ along
  $C^\star_H$ reduces $-\rot(\pi(t, s))$ by~1.  As the dual graph
  $H^\star$ of $H$ can be obtained from $G^\star$ by first contracting
  $e^\star$ and then taking a subgraph, all edges contained in
  $H^\star$ were already contained in $G^\star$.  Moreover, all valid
  edges in $H^\star$ are also valid in $G^\star$ and thus each edge in
  $C^\star_H$ corresponds to a valid edge in $G^\star$.  If these
  valid edges form a cycle in $G^\star$, then this is the desired
  cycle $C^\star$.  Otherwise, one of the two edges in $C^\star_H$
  incident to the outer face of $H$ is in $G^\star$ incident to the
  outer face $g$ of $G$ and the other is incident to the face $f$ of
  $G$.  In this case the edges of $C^\star_H$ from in $G^\star$ a path
  from $f$ to $g$ and thus adding the edge $e^\star$ yields the cycle
  $C^\star$, which concludes the proof of Claim~\ref{claim-2} and thus
  of this lemma.
\end{proof}

We get the following slightly weaker result for the case $\sigma =
\tau = 3$.

\begin{lemma}
  \label{lem:existence-of-cycle-special-cases}
  Let $G$ be a $k$-critical $st$-graph and let $\mathcal R$ be a
  valid $(3, 3)$-orthogonal representation. If $-\rot(\pi(t, s)) \ge
  \beta_{\low} + k + 2$ holds, then there exists a valid cycle $C^\star$
  such that bending $\mathcal R$ along $C^\star$ reduces $-\rot(\pi(t,
  s))$ by~1.
\end{lemma}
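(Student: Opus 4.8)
The plan is to mirror the proof of Lemma~\ref{lem:existence-of-cycle} almost verbatim, adjusting only the one place where the constant $3$ in the final inequality needs more slack. As before, I would aim to exhibit a valid cycle $C^\star$ in $G^\star$ that has $s$ to its left and $t$ to its right; such a cycle necessarily runs through the outer face and its arc entering the outer face increases the rotation of some edge or internal vertex of $\pi(t,s)$, hence reduces $-\rot(\pi(t,s))$ by~$1$. The construction of $C^\star$ is identical: first establish the analogue of Claim~\ref{claim-1}, that there is a valid edge $e^\star$ which is either a loop with $s$ left and $t$ right, or has the outer face as source and primal edge on $\pi(s,t)$; then run the same structural induction of Claim~\ref{claim-2} (delete the primal edge $e$ of $e^\star$, pass to the block $H$ containing $s$ and $t$, apply the inductive hypothesis to the $k$-critical $st$-graph $H$ whose representation still has $-\rot(\pi(t,s))\ge\beta_\low+k+2$, and lift $C^\star_H$ back through the contracted dual edge $e^\star$).

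The only genuinely new work is in the analogue of Claim~\ref{claim-1}. Assuming the claimed valid edge does not exist, the same bound Equation~\eqref{eq:rot-pi-st} on $\rot(\pi(s,t))$ holds (its derivation did not use $\sigma$ or $\tau$ at all), and since $\sigma=\tau=3$ we are automatically in the case $\deg(s)=\deg(t)=1$ does not hold, so $\rot(\pi(s,t))\le k-1$. Plugging $\rot(\pi(s,t))\le k-1$, the new precondition $-\rot(\pi(t,s))\ge\beta_\low+k+2$, and $\rot(s_g)=\rot(t_g)=0$ (because $\sigma-3=\tau-3=0$) into $\rot(\pi(s,t))+\rot(\pi(t,s))+\rot(s_g)+\rot(t_g)=-4$, I get
\[
(k-1)-(\beta_\low+k+2)+0+0 \;\ge\; -4,
\]
i.e. $-\beta_\low-3\ge-4$, i.e. $\beta_\low\le 1$. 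But for $\sigma=\tau=3$ Fact~\ref{fact:lower-bound} gives $\beta_\low=\lceil 3\rceil-1=2$, a contradiction. This is precisely why the extra ``$+1$'' in the hypothesis (i.e. $\beta_\low+k+2$ instead of $\beta_\low+k+1$) is needed: the excluded value $\sigma+\tau=6$ has $\beta_\low$ one larger than the $\lfloor(\sigma+\tau)/2\rfloor\le 3$ slack used in the $\sigma+\tau\le5$ case exactly compensates.

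The main obstacle — really the only subtlety — is making sure the inductive step of Claim~\ref{claim-2} still goes through with the strengthened hypothesis: the block $H$ obtained after deleting $e$ must again be a $k$-critical $st$-graph (it is a subgraph of $G$ with $H+st$ biconnected, and deleting a non-loop edge cannot create new degree-4 endpoints of inflexible edges, so its critical count is at most $k$), and the representation $\mathcal S$ induced on $H$ still satisfies $-\rot(\pi(t,s))\ge\beta_\low+k+2$ since $\pi(t,s)$ lies entirely in $H$ and its rotation is unchanged, while $\beta_\low$ depends only on $\sigma,\tau$ which are preserved. Note one mild point: if $H$ happens to have $\sigma'+\tau'\le 5$ one could even invoke Lemma~\ref{lem:existence-of-cycle} directly, but invoking this lemma inductively is cleanest since $\sigma=\tau=3$ is preserved along $\pi(t,s)$; either way the induction terminates because $|E(H)|<|E(G)|$, and lifting $C^\star_H$ through $e^\star$ (which is a single contracted dual edge) yields the desired valid cycle $C^\star$ in $G^\star$, completing the proof.
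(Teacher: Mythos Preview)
Your approach is correct but takes a genuinely different route from the paper. You adapt the full machinery of Lemma~\ref{lem:existence-of-cycle} to the case $\sigma=\tau=3$: you redo Claim~\ref{claim-1} with the strengthened hypothesis (your arithmetic $(k-1)-(\beta_\low+k+2)+0+0\ge-4\Rightarrow\beta_\low\le1$, contradicting $\beta_\low=2$, is exactly right), and then re-run the structural induction of Claim~\ref{claim-2}. The paper instead gives a short reduction: it attaches a pendant edge $e=ss'$ of flexibility~$1$, drawn with one bend, making $s'$ the new pole; the augmented graph $G+e$ then carries a $(1,3)$-orthogonal representation with $\sigma'+\tau'=4\le5$, and a two-line computation shows $-\rot(\pi(t,s'))\ge\beta_\low'+k+1$, so Lemma~\ref{lem:existence-of-cycle} applies directly to $G+e$ with no new induction.

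What each buys: the paper's reduction is a few sentences that fully leverage the already-proved lemma and avoid repeating Claim~\ref{claim-2}. Your direct adaptation is longer but self-contained, and it makes the role of the extra ``$+1$'' in the hypothesis completely transparent in the Claim~\ref{claim-1} arithmetic. One small caveat in your write-up: the assertion that ``$\sigma=\tau=3$ is preserved'' when passing to the block $H$ is not quite right---if the deleted edge on $\pi(s,t)$ is incident to $s$, or if taking the block strips further edges at $s$, then $\sigma_H<3$. You do anticipate this with your fallback to Lemma~\ref{lem:existence-of-cycle} when $\sigma'+\tau'\le5$, and since one always has $\sigma_H\le\sigma$ and $\tau_H\le\tau$ (so the two cases $\sigma_H+\tau_H\le5$ and $\sigma_H=\tau_H=3$ are exhaustive, and in both the required lower bound on $-\rot(\pi(t,s))$ is met), the induction does go through; just tighten the phrasing.
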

\begin{proof}
  Since $\sigma = \tau = 3$ holds, we have $\beta_\low = 2$ and thus
  $-\rot(\pi(t, s)) \ge k + 4$.  We add an edge $e = ss'$ with
  flexibility~1 to $G$, where~$s'$ is a new vertex, and consider the
  orthogonal representation $\mathcal R'$ of $G + e$ where $e$ has one
  bend such that $e$ contributes a rotation of~1 to $\pi(t, s')$.
  Since the rotation at $s$ in the outer face is~1, we have
  $\rot(\pi(t, s')) = \rot(\pi(t, s)) + 2$.  If follows that
  $-\rot(\pi(t, s')) \ge k + 4 - 2 = k + 2$ holds.  Since $\mathcal
  R'$ is a $(1, 3)$ orthogonal representation of $G + e$, and since
  the lower bound $\beta_\low'$ is~1 for $(1, 3)$ orthogonal
  representations, the precondition of
  Lemma~\ref{lem:existence-of-cycle}, namely the inequality
  $-\rot(\pi(t, s')) \ge \beta_\low' + k + 1$, is satisfied, which
  concludes the proof.
\end{proof}

The previous lemmas basically show that the existence of a valid
orthogonal representation with a lot of bends implies the existence of
valid orthogonal representations for a ``large'' interval of bend
numbers.  This is made more precise in the following.

Let $\mathcal B^\sigma_\tau$ be the set containing an integer $\beta$
if and only if $G$ admits a valid $(\sigma, \tau)$-orthogonal
representation with $\beta$ bends.  Assume $G$ admits a valid
$(\sigma, \tau)$-orthogonal representation, that is $\mathcal
B^\sigma_\tau$ is not empty.  We define the \emph{maximum bend value}
$\beta_{\max}$ to be the maximum in $\mathcal B^\sigma_\tau$.
Moreover, let $\beta \in \mathcal B^\sigma_\tau$ be the smallest
value, such that every integer between $\beta$ and $\beta_{\max}$ is
contained in~$\mathcal B^\sigma_\tau$.  Then we call the interval
$[\beta_\low, \beta-1]$ the \emph{$(\sigma, \tau)$-gap} of $G$.  The
value $\beta - \beta_\low$ is also called the \emph{$(\sigma,
  \tau)$-gap} of $G$; see Fig.~\ref{fig:gaps-1}.

\begin{figure}[tb]
  \centering
  \includegraphics[page=1]{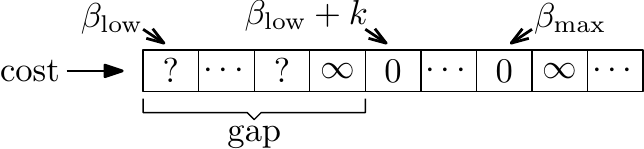}
  \caption{A cost function with gap~$k$.}
  \label{fig:gaps-1}
\end{figure}

\begin{lemma}
  \label{lem:small-gap}
  The $(\sigma, \tau)$-gap of a $k$-critical $st$-graph $G$ is at
  most~$k$ if $\sigma + \tau \le 5$.  The $(3, 3)$~gap of $G$ is at
  most $k + 1$.
\end{lemma}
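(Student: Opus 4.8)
The plan is to read Lemmas~\ref{lem:existence-of-cycle} and~\ref{lem:existence-of-cycle-special-cases} as a bend-reduction engine and run it starting from a representation with the maximum possible number of bends. Recall that the $(\sigma,\tau)$-gap equals $\beta-\beta_\low$, where $\beta$ is the smallest integer such that $\beta,\beta+1,\dots,\beta_{\max}$ all lie in $\mathcal B^\sigma_\tau$. Hence, to bound the gap by $k$ (resp.\ $k+1$), it suffices to show that every integer in $[\beta_\low+k,\beta_{\max}]$ (resp.\ $[\beta_\low+k+1,\beta_{\max}]$) belongs to $\mathcal B^\sigma_\tau$: then the maximal run of consecutive integers ending at $\beta_{\max}$ reaches down at least to $\beta_\low+k$ (resp.\ $\beta_\low+k+1$). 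If $\beta_{\max}$ is already at most $\beta_\low+k$ (resp.\ $\beta_\low+k+1$), the bound is immediate since always $\beta\le\beta_{\max}$, so I would assume otherwise; in particular $\mathcal B^\sigma_\tau\neq\emptyset$.

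First I would fix a valid $(\sigma,\tau)$-orthogonal representation $\mathcal R$ with $\beta_{\max}$ bends. The workhorse identity is that, since the rotations around the outer face $g$ sum to $-4$ and $\rot(s_g)=\sigma-3$, $\rot(t_g)=\tau-3$, every valid $(\sigma,\tau)$-orthogonal representation satisfies $\rot(\pi(s,t))+\rot(\pi(t,s))=2-\sigma-\tau\le 0$. Using that mirroring exchanges the two outer paths, I may assume $\rot(\pi(t,s))\le\rot(\pi(s,t))$ in $\mathcal R$; then $2\rot(\pi(t,s))\le\rot(\pi(s,t))+\rot(\pi(t,s))\le 0$, so $\rot(\pi(t,s))\le 0$, and a short computation using $2-\sigma-\tau\le 0$ and $\rot(\pi(t,s))\le(2-\sigma-\tau)/2$ shows $|\rot(\pi(s,t))|\le-\rot(\pi(t,s))$, i.e.\ $\beta_{\max}=-\rot(\pi(t,s))$. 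Now I would apply Lemma~\ref{lem:existence-of-cycle} (for $\sigma+\tau\le 5$) or Lemma~\ref{lem:existence-of-cycle-special-cases} (for $\sigma=\tau=3$) repeatedly: as long as $-\rot(\pi(t,s))\ge\beta_\low+k+1$ (resp.\ $\ge\beta_\low+k+2$), there is a valid cycle, and bending along it yields again a valid $(\sigma,\tau)$-orthogonal representation of the (unchanged, hence still $k$-critical) graph $G$ with $-\rot(\pi(t,s))$ smaller by one. This produces a chain $\mathcal R=\mathcal R_0,\mathcal R_1,\dots$ whose values of $-\rot(\pi(t,s))$ run through $\beta_{\max},\beta_{\max}-1,\dots$ down to $\beta_\low+k$ (resp.\ $\beta_\low+k+1$).

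Finally I would verify that each $\mathcal R_j$ has exactly $-\rot(\pi(t,s))$ bends, so that it witnesses membership of that value in $\mathcal B^\sigma_\tau$. Writing $r=-\rot(\pi(t,s))$, the identity above gives $\rot(\pi(s,t))=2-\sigma-\tau+r$; since $2-\sigma-\tau\le 0$ this is at most $r$, and it is at least $-r$ precisely when $r\ge(\sigma+\tau)/2-1$, which holds throughout the chain because $r\ge\beta_\low+k=\lceil(\sigma+\tau)/2\rceil-1+k$ (also in the $(3,3)$-case, where the smallest value reached is $\beta_\low+k+1\ge\beta_\low+k$). Hence the number of bends $\max\{|\rot(\pi(s,t))|,\,r\}$ equals $r$, and the required interval is realized. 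The main obstacle I anticipate is exactly this last piece of bookkeeping: one must ensure that reducing $-\rot(\pi(t,s))$ genuinely reduces the bend count rather than leaving it pinned by $\pi(s,t)$, and the identity $\rot(\pi(s,t))+\rot(\pi(t,s))=2-\sigma-\tau$ together with the lower bound $r\ge\beta_\low$ from Fact~\ref{fact:lower-bound} is what makes this go through. The $(3,3)$-case is no harder in this respect; it only yields the weaker bound $k+1$ because the applicable reduction lemma, Lemma~\ref{lem:existence-of-cycle-special-cases}, requires $-\rot(\pi(t,s))$ to exceed $\beta_\low+k$ by two rather than by one before a reduction is possible.
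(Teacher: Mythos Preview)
Your proposal is correct and follows essentially the same strategy as the paper: iterate Lemma~\ref{lem:existence-of-cycle} (resp.\ Lemma~\ref{lem:existence-of-cycle-special-cases}) to step the bend number down from a valid representation until the precondition of the lemma fails, thereby filling the interval $[\beta_\low+k,\beta_{\max}]$ (resp.\ $[\beta_\low+k+1,\beta_{\max}]$). The only noteworthy difference is in the bookkeeping that the bend count actually drops at each step. The paper argues by contradiction, appealing to the tightness of $\beta_\low$: if after bending both $-\rot(\pi(s,t))$ and $-\rot(\pi(t,s))$ exceeded $\beta_\low$, no $(\sigma,\tau)$-representation could attain $\beta_\low$ bends. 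You instead compute directly from the outer-face identity $\rot(\pi(s,t))+\rot(\pi(t,s))=2-\sigma-\tau$ that $|\rot(\pi(s,t))|\le r$ whenever $r\ge\beta_\low$; this is a cleaner and more self-contained verification, and it avoids relying on the (not explicitly proved) tightness of the bound in Fact~\ref{fact:lower-bound}.
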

\begin{proof}
  In the following, we assume $\sigma + \tau \le 5$; the case $\sigma
  = \tau = 3$ works literally the same when replacing
  Lemma~\ref{lem:existence-of-cycle} by
  Lemma~\ref{lem:existence-of-cycle-special-cases}.  Let $\mathcal R$
  be a valid $(\sigma, \tau)$-orthogonal representation with $\beta
  \ge \beta_\low + k + 1$ bends.  We show the existence of a valid
  $(\sigma, \tau)$-orthogonal representation with $\beta - 1$ bends.
  It follows that the number of bends can be reduced step by step down
  to $\beta_\low + k$, which shows that the gap is at most $k$.

  As $\mathcal R$ has $\beta$ bends, either $-\rot(\pi(s, t)) = \beta$
  or $-\rot(\pi(t, s)) = \beta$.  Without loss of generality, we
  assume $-\rot(\pi(t, s)) = \beta \ge \beta_\low + k + 1$.  Due to
  Lemma~\ref{lem:existence-of-cycle} there exists a valid cycle
  $C^\star$, such that bending along $C^\star$ reduces $-\rot(\pi(t,
  s))$ by~1.  This also reduces the number of bends by~1 (and thus
  yields the desired orthogonal representation) if $-\rot(\pi(s, t))$
  is not increased above $\beta - 1$.  Assume for a contradiction that
  $-\rot(\pi(s, t))$ was increased above $\beta - 1$.  Then in the
  resulting orthogonal representation $-\rot(\pi(s, t))$ is greater
  than~$\beta_\low$ and $-\rot(\pi(t, s))$ is at least~$\beta_\low$.
  It follows, that every $(\sigma, \tau)$-orthogonal representation
  has more than $\beta_\low$ bends, which contradicts the fact, that
  $\beta_\low$ is a tight lower bound.  
\end{proof}

The following lemma basically expresses the gap of an $st$-graph in
terms of the rotation along $\pi(s, t)$ instead of the number of
bends.

\begin{lemma}
  \label{lem:small-gap-rot}
  Let $G$ be an $st$-graph with $(\sigma, \tau)$-gap $k$.  The set
  $\{\rho \mid G$ admits a valid $(\sigma,\tau)$-orthogonal
  representation with $\rot(\pi(s, t)) = \rho\}$ is the union of at
  most $k+1$ intervals.
\end{lemma}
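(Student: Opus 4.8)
The plan is to translate the statement, via the symmetry of mirroring, into a purely combinatorial fact about the set $\mathcal B^\sigma_\tau$ of admissible bend numbers of $G$, whose structure is already pinned down by the notion of the $(\sigma,\tau)$-gap (and by Lemma~\ref{lem:small-gap}). Write $R$ for the set in question and put $c := 2 - \sigma - \tau$, so $c \le 0$. Summing the rotations around the outer face $g$ of a $(\sigma,\tau)$-orthogonal representation gives $-4$, and since $\rot(s_g) = \sigma - 3$ and $\rot(t_g) = \tau - 3$, every valid $(\sigma,\tau)$-orthogonal representation satisfies $\rot(\pi(s,t)) + \rot(\pi(t,s)) = c$. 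The first step is to prove $R = (-\mathcal B^\sigma_\tau) \cup (c + \mathcal B^\sigma_\tau)$. For the inclusion ``$\subseteq$'': if a valid representation has $\beta$ bends then, writing $\rho = \rot(\pi(s,t))$, we have $\beta = \max\{|\rho|,\,|c-\rho|\}$, and a short case distinction on the signs of $\rho$ and of $c - \rho$ shows that $\rho \le c/2$ forces $\rho = -\beta$ and $\rho \ge c/2$ forces $\rho = c + \beta$. For ``$\supseteq$'': mirroring an orthogonal representation exchanges $\pi(s,t)$ with $\pi(t,s)$ (hence sends $\rot(\pi(s,t)) = \rho$ to $\rot(\pi(s,t)) = c - \rho$) while preserving validity, the number of bends, and the number of occupied incidences at each pole; thus whenever $\beta \in \mathcal B^\sigma_\tau$, \emph{both} values $-\beta$ and $c + \beta$ are realised by valid $(\sigma,\tau)$-orthogonal representations.

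Next I would unfold the structure of $\mathcal B^\sigma_\tau$ using that its $(\sigma,\tau)$-gap is $k$: one has $\mathcal B^\sigma_\tau = \mathcal B' \cup [\beta_\low + k, \beta_{\max}]$, where the second part is a full interval of integers and $\mathcal B' \subseteq [\beta_\low, \beta_\low + k - 2]$ (the point $\beta_\low + k - 1$ is absent by minimality in the definition of the gap). Substituting this into $R = (-\mathcal B^\sigma_\tau) \cup (c + \mathcal B^\sigma_\tau)$, the number line (for $k \ge 1$) splits into three pairwise separated regions: a left region on which $R$ is exactly the full interval $[-\beta_{\max}, -\beta_\low - k]$ (one interval), a right region on which $R$ is exactly the full interval $[c + \beta_\low + k, c + \beta_{\max}]$ (one interval), and a middle region on which $R$ coincides with $S := (-\mathcal B') \cup (c + \mathcal B')$. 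Since $-\mathcal B' \subseteq [-\beta_\low - k + 2, -\beta_\low]$, since $c + \mathcal B' \subseteq [c + \beta_\low, c + \beta_\low + k - 2]$, and since $c + 2\beta_\low = 2\lceil (\sigma+\tau)/2\rceil - (\sigma+\tau) \in \{0,1\}$, the two integer ranges of these sets are adjacent or overlapping, so $S$ lies inside a block of at most $2k - 2$ consecutive integers and hence consists of at most $k - 1$ maximal intervals. Adding the two full intervals yields at most $(k-1) + 2 = k + 1$ intervals, as claimed (for $k = 0$ the middle region is empty and the two full intervals are adjacent or overlapping, so they merge into a single interval).

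The routine parts are the sign case distinction in the first step and the ceiling arithmetic in the last step. The one point that genuinely needs care is the equality $R = (-\mathcal B^\sigma_\tau) \cup (c + \mathcal B^\sigma_\tau)$ — in particular the ``$\supseteq$'' direction, where the mirror operation is exactly what guarantees that \emph{both} candidates $-\beta$ and $c+\beta$ occur for every admissible bend number $\beta$; without it one only obtains an inclusion, which gives no upper bound at all on the number of maximal intervals of $R$. A secondary subtlety is that the tight bound $k+1$ (rather than a naive $k+2$) relies on treating the two ``small'' parts $-\mathcal B'$ and $c + \mathcal B'$ together inside one short block rather than bounding their interval counts separately.
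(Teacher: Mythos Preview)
Your proof is correct and follows the same high-level strategy as the paper: both arguments reduce the question to the structure of the admissible bend set $\mathcal B^\sigma_\tau$, decompose it as $\mathcal B' \cup [\beta_\low + k,\beta_{\max}]$ with $\mathcal B' \subseteq [\beta_\low,\beta_\low + k - 2]$, and then count maximal intervals of $R = (-\mathcal B^\sigma_\tau)\cup(c+\mathcal B^\sigma_\tau)$.

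There are two genuine differences worth noting. First, you make the equality $R = (-\mathcal B^\sigma_\tau)\cup(c+\mathcal B^\sigma_\tau)$ explicit and justify the inclusion ``$\supseteq$'' via mirroring; the paper leaves this implicit. Second, and more interestingly, your interval count for the ``small'' part is cleaner than the paper's. The paper bounds the number of intervals of $-\mathcal B'$ and of $c+\mathcal B'$ separately by $\lceil (k-1)/2\rceil$ each, obtaining $2\lceil (k-1)/2\rceil + 2$, and then has to treat the parity of $k-1$: when $k-1$ is odd this overshoots by one, and an extra argument is needed (the worst-case alternating pattern forces $\beta_\low \in \mathcal B'$, and the two singleton intervals at $\beta_\low$ from the two sides are adjacent by the computation $c+2\beta_\low\in\{0,1\}$). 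You sidestep this entirely by observing that the same arithmetic identity $c+2\beta_\low\in\{0,1\}$ makes the two ranges $-[\beta_\low,\beta_\low+k-2]$ and $c+[\beta_\low,\beta_\low+k-2]$ adjacent or overlapping, so their union $S$ sits inside a single block of at most $2k-2$ consecutive integers and hence has at most $k-1$ maximal intervals. This is a nicer way to get the tight bound $k+1$ without a case split.
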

\begin{proof}
  Recall that an orthogonal representation of $G$ has $\beta$ bends if
  either $-\rot(\pi(s, t)) = \beta$ or $-\rot(\pi(t, s)) = \beta$.  We
  first consider the case that $-\rot(\pi(s, t)) = \beta$ for any
  number of bends $\beta \in [\beta_\low, \beta_{\max}]$.
  
  By the definition of the gap, there exists a valid orthogonal
  representation for $-\rot(\pi(s, t)) \in [\beta_{\low} + k,
  \beta_{\max}]$, which forms the first interval.  Moreover, $G$ does
  not admit a valid orthogonal representation with $\beta_{\low} + k -
  1$ bends, since the gap would be smaller otherwise.  Thus it remains
  to cover all allowed values contained in $[\beta_\low, \beta_\low +
  k - 2]$ by intervals.  In the worst case, exactly every second value
  is possible.  As $[\beta_\low, \beta_\low + k - 2]$ contains $k - 1$
  integers, this results in $\lceil(k - 1)/2\rceil$ intervals of size
  $1$.  Thus, we can cover all allowed values for $\rot(\pi(s, t))$ in
  case $-\rot(\pi(s, t)) \in [\beta_{\low} + k, \beta_{\max}]$ holds
  using only $\lceil(k - 1)/2\rceil + 1$ intervals.

  It remains to consider the case where $G$ has $\beta$ bends due to
  the fact that $-\rot(\pi(t, s)) = \beta$ holds.  With the same
  argument we can cover all possible values of $\pi(t, s)$ using
  $\lceil(k - 1)/2\rceil + 1$ intervals.  As $\rot(\pi(s, t))$ equals
  $-\rot(\pi(t, s))$ shifted by some constant, we can cover all
  allowed values for $\rot(\pi(s, t))$ using $2\cdot \lceil(k -
  1)/2\rceil + 2$ intervals.  If $k - 1$ is even, this evaluates to $k
  + 1$ yielding the statement of the lemma.  If $k - 1$ is odd and we
  assume the above described worst case, then we need one additional
  interval.  However, in this case there must exists a valid
  orthogonal representation with $\beta_\low$ bends and we counted two
  intervals for this bend number, namely for the case $-\rot(\pi(s,
  t)) = \beta_\low$ and $-\rot(\pi(t, s)) = \beta_\low$.  We show that
  a single interval suffices to cover both cases by showing that
  either $-\rot(\pi(s, t)) = \beta_\low$ or $-\rot(\pi(s, t)) =
  \beta_\low - 1$ holds if $-\rot(\pi(t, s)) = \beta_\low$.  This
  again leads to the desired $k + 1$ intervals.

  Due to the fact that the rotation around the outer face is~$-4$, the
  equation $-\rot(\pi(s, t)) = \sigma + \tau - 2 + \rot(\pi(t, s))$
  holds.  For $-\rot(\pi(t, s)) = \beta_\low$ we get the following.
  \begin{align*}
    \sigma + \tau - 2 - \beta_\low = \sigma + \tau - 2 -
    \left\lceil\frac{\sigma + \tau}{2}\right\rceil + 1
    =\left\lfloor\frac{\sigma + \tau}{2}\right\rfloor - 1
  \end{align*}
  If $\sigma + \tau$ is even, this is equal to $\beta_\low$, otherwise
  it is equal to $\beta_\low - 1$, which concludes the proof.
\end{proof}

\subsection{Computing the Cost Functions of Compositions}
\label{sec:comp-cost-funct}

Let $G$ be a graph with fixed planar embedding.  We describe a flow
network, similar to the one by Tamassia~\cite{t-eggmb-87} that can be
used to compute orthogonal representations of graphs with thick edges.
In general, we consider a flow network to be a directed graph with a
lower and an upper bound assigned to every edge and a demand assigned
to every vertex.  The bounds and demands can be negative.  An
assignment of flow-values to the edges is a feasible flow if it
satisfies the following properties.  The flow-value of each edge is at
least its lower and at most its upper bound.  For every vertex the
flow on incoming edges minus the flow on outgoing edges must equal its
demand.

We define the flow network $N$ as follows. The network $N$ contains a
node for each vertex of $G$, the \emph{vertex nodes}, each face of
$G$, the \emph{face nodes}, and each edge of $G$, the \emph{edge
  nodes}.  Moreover, $N$ contains arcs from each vertex to all
incident faces, the \emph{vertex-face arcs}, and similarly from each
edge to both incident faces, the \emph{edge-face arcs}.  We interpret
an orthogonal representation $\mathcal R$ of $G$ as a flow in $N$.  A
rotation $\rot(e_f)$ of an edge $e$ in the face $f$ corresponds to the
same amount of flow on the edge-face arc from $e$ to $f$.  Similarly,
for a vertex $v$ incident to $f$ the rotation $\rot(v_f)$ corresponds
to the flow from $v$ to $f$.

Obviously, the
properties~\eqref{prop:1-extended}--\eqref{prop:4-extended} of an
orthogonal representation are satisfied if and only if the following
conditions hold for the flow (note that we allow $G$ to have thick
edges).
\begin{compactenum}[(1)]
\item \label{prop:1-flow}The total amount of flow on arcs incident to
  a face node is~$4$ ($-4$ for the outer face).
\item \label{prop:2-flow}The flow on the two arcs incident to an edge
  node stemming from a $(\sigma, \tau)$-edge sums up to~$2 - (\sigma +
  \tau)$.
\item \label{prop:3-flow}The total amount of flow on arcs incident to
  a vertex node, corresponding to the vertex $v$ with incident edges
  $e_1, \dots, e_\ell$ occupying $\sigma_1, \dots, \sigma_\ell$
  incidences of $v$ is $\sum(\sigma_i + 1) - 4$.
\item \label{prop:4-flow}The flow on vertex-face arcs lies in the
  range $[-2, 1]$.
\end{compactenum}
Properties~\eqref{prop:1-flow}--\eqref{prop:3-flow} are equivalent to
the flow conservation requirement when setting appropriate demands.
Moreover, property~\eqref{prop:4-flow} is equivalent to the capacity
constraints in a flow network when setting the lower and upper bounds
of vertex-face arcs to~$-2$ and~$1$, respectively.  In the following,
we use this flow network to compute the cost function of a rigid
composition of graphs.  The term $T_{\flow}(\ell)$ denotes the time
necessary to compute a maximal flow in a planar flow network of size
$\ell$.

\begin{lemma}
  \label{lem:r-composition-fpt}
  The (restricted) cost functions of a rigid composition of $\ell$
  graphs can be computed in $O(2^k \cdot T_{\flow}(\ell))$ time if the
  resulting graph is $k$-critical.
\end{lemma}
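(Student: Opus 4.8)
The goal is to compute the restricted cost functions of a rigid composition $G$ with skeleton $H$ (of size $\ell$), given the restricted cost functions of the $\ell$ replacement graphs $G_1,\dots,G_\ell$, under the assumption that $G$ is $k$-critical. The natural approach is to combine three ingredients already developed in the paper: (i) the flow network $N$ for orthogonal representations with thick edges, (ii) the fact that each child $G_i$, viewed as a $(\sigma_i,\tau_i)$-edge in $\skel(\mu)$, has a $(\sigma,\tau)$-cost function whose support—by Lemma~\ref{lem:small-gap} and Lemma~\ref{lem:small-gap-rot}—is ``almost'' an interval (the set of achievable $\rot(\pi(s,t))$ values is a union of at most $k_i+1$ intervals, where $k_i$ is the number of critical edges inside $G_i$), and (iii) the fact that $\sum_i k_i \le k$, so the total deviation from ``interval-shaped'' behaviour is bounded by $k$.

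**Main steps.** First I would fix an embedding of $\skel(\mu)$ (there are $O(1)$ choices for a triconnected skeleton of a degree-4 graph up to reflection, or more precisely we handle the embedding of $H$ directly since $H+st$ is triconnected and has a unique embedding up to reflection). Second, I would fix, for each child $G_i$, which of the constantly many ``types'' it takes: the pair $(\sigma_i,\tau_i)$ of occupied incidences and, crucially, which interval of the at-most-$(k_i+1)$ intervals from Lemma~\ref{lem:small-gap-rot} the value $\rot(\pi(s_i,t_i))$ falls into. Inside a single such interval the $(\sigma_i,\tau_i)$-cost function restricted to that interval is $0$ on a sub-interval and $\infty$ outside—wait, more carefully: within one interval every bend value is achievable, so the cost is finite everywhere on it; but we still need it to be convex (actually $0$) to plug into Tamassia's flow machinery. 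Here I would invoke that on each such interval, after the gap is removed, the cost is $0$ (the replacement graphs in our application have flexibility-based cost functions that are $0$ on their support), so the restricted behaviour on each interval is just a box constraint $[\text{lo}_i,\text{hi}_i]$ on the edge-face flow of the thick edge $G_i$—exactly the form a flow network handles. Third, for each global choice of (embedding, the $(\sigma_i,\tau_i)$, the interval index for each $i$), I would build the flow network $N$ on $\skel(\mu)$ with thick edges, set the lower/upper bounds on each thick-edge's edge-face arcs according to the chosen interval, add an extra edge (or source/sink adjustment) to pin the number of bends $\beta$ of the whole composition, and compute a feasible flow via a max-flow computation in $O(T_{\flow}(\ell))$ time; the resulting flow gives a valid $(\sigma,\tau)$-orthogonal representation of $\pert(\mu)$ with $\beta$ bends iff one exists with the chosen combination of child-types. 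Taking the minimum (here: checking feasibility, since costs are $0$/$\infty$) over all combinations yields $\cost^\sigma_\tau(\beta)$ for $G$; ranging over $\beta$ and over $(\sigma,\tau)$ gives all the cost functions.

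**Counting the combinations.** The key calculation is that the number of combinations is $O(2^k)$, not merely $O(\prod_i(k_i+1))$. Each child contributes $O(1)$ choices of $(\sigma_i,\tau_i)$ and of which-path-determines-bends, times $(k_i+1)$ choices of interval. So the total is $O(1)^\ell \cdot \prod_i (k_i+1)$, and since $\sum k_i \le k$ we have $\prod_i(k_i+1) \le 2^{\sum k_i} \le 2^k$... but the $O(1)^\ell$ factor is the danger. The resolution: a child $G_i$ with $k_i = 0$ has a connected (single-interval) support, hence contributes only $O(1)$ total choices, and—this is the crucial point—children with $k_i=0$ can in fact be handled simultaneously inside the single flow computation (their box constraints are just fixed edge capacities, imposing no branching), so the $O(1)$ per such child is absorbed into $T_{\flow}(\ell)$ rather than multiplied. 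Only the at most $k$ children with $k_i \ge 1$ force branching, and the branching factor of child $i$ is $O(k_i+1) = O(2^{k_i})$ (absorbing the $(\sigma_i,\tau_i)$ choices), giving $\prod 2^{k_i} = 2^{\sum k_i} \le 2^k$ total. For restricted cost functions I would additionally, as in Lemma~\ref{lem:general-series-composition}, pin the relevant degree-2 poles to $90^\circ$ by fixing the corresponding vertex-face flow, which again only tightens a capacity and costs nothing.

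**The main obstacle.** I expect the delicate part to be exactly this bookkeeping that the $O(1)$-branching children do not blow the bound up to $O(2^k \cdot C^\ell)$: one must argue that all ``gap-free'' children can be packed into a single flow network whose feasibility is decided by one max-flow call, so that only the $k$ critical-edge-carrying children are enumerated. A secondary subtlety is verifying that the restriction of a child's cost function to one of its $\le k_i+1$ intervals really does become a plain box constraint (cost $0$ on a contiguous range of $\rot(\pi(s,t))$ values, $\infty$ outside) suitable for the flow network—this is where Lemma~\ref{lem:small-gap-rot} is used, and one must be careful that mirroring the child (exchanging $\pi(s_i,t_i)$ and $\pi(t_i,s_i)$) is correctly accounted for, which is why the ``which path determines the bends'' choice is part of the per-child type and contributes a constant factor that gets absorbed. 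Once these two points are nailed down, the running-time bound $O(2^k \cdot T_{\flow}(\ell))$ follows immediately from the single max-flow per combination.
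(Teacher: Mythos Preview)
Your approach is the paper's approach: branch over the at most $k_i+1$ intervals of achievable $\rot(\pi(s_i,t_i))$ for each child $G_i$ (Lemma~\ref{lem:small-gap-rot}), encode each chosen interval as a box constraint on the thick-edge arcs in the Tamassia-style network on the skeleton, and test feasibility by a max-flow. Your bound $\prod_i(k_i+1)\le 2^{\sum_i k_i}\le 2^k$ is in fact a cleaner version of the paper's hypercuboid optimisation.

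Two points where the paper is tighter. First, your $O(1)^\ell$ worry is unnecessary: because $H+st$ is triconnected, every non-pole vertex of $H$ has degree at least~$3$, so each $\sigma_i$ (and $\tau_i$) is \emph{forced} to equal the degree of $s_i$ in $G_i$; there is no per-child branching on $(\sigma_i,\tau_i)$ except for the at most two children touching the poles $s,t$ of $G$, a global constant. Likewise no ``which path determines the bends'' branch is needed, since Lemma~\ref{lem:small-gap-rot} already hands you intervals for $\rot(\pi(s_i,t_i))$ directly.

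Second---and this is a genuine gap in the running-time claim---``ranging over $\beta$'' loses you the bound. Even if you first argue via Lemma~\ref{lem:small-gap} that only $O(k)$ values of $\beta$ need to be tested (plus one maximisation to find $\beta_{\max}$), you get $O(k\cdot 2^k\cdot T_{\flow}(\ell))$, not $O(2^k\cdot T_{\flow}(\ell))$. The paper avoids this by, for each of the $O(2^k)$ interval-combinations, \emph{not} fixing $\beta$ but instead splitting the outer-face node so that $-\rot(\pi(s,t))$ becomes the flow on a single arc, and then computing its minimum and maximum by two max-flow calls. Each combination thus contributes an interval of achievable bend numbers; bucket-sorting and sweeping these $O(2^k)$ intervals over the range $[\beta_{\low},\beta_{\low}+k-1]$ recovers the whole cost function in $O(k+2^k)$ extra time, which is dominated by the flows.
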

\begin{proof}
  First note that in case of a rigid composition, computing
  ``restricted'' cost functions makes only a difference for the poles
  of the skeleton (as all other vertices have degree at least~3).
  However, enforcing $90^\circ$ angles for the poles is already
  covered by the number of incidences the resulting graph occupies at
  its poles.

  Let $H$ be the skeleton of the rigid composition of the graphs $G_1,
  \dots, G_\ell$ and let $G$ be the resulting graph with poles $s$ and
  $t$.  Before we show how to compute orthogonal representations of
  $G$, we show that the number of incidences $\sigma_i$ and $\tau_i$ a
  subgraph $G_i$ occupies at its poles $s_i$ and $t_i$ is (almost)
  fixed.  Assume $s_i$ is not one of the poles $s$ or $t$ of $G$.
  Then $s_i$ has at least three incident edges in the skeleton $H$ as
  $H + st$ is triconnected.  Thus, the subgraph $G_i$ occupies at
  most two incidences in any orthogonal representation of $G$, and hence
  $s_i$ has either degree~1 or degree~2 in $G_i$.  In the former case
  $\sigma_i$ is~1, in the latter $\sigma_i$ has to be~2.  If $s_i$ is
  one of the poles of $G$, then it may happen that $G_i$ occupies
  incidences in some orthogonal representations of $G$ and three
  incidences in another orthogonal representation.  However, this
  results in a constant number of combinations and thus we can assume
  that the values $\sigma_i$ and $\tau_i$ are fixed for $i \in \{1,
  \dots, \ell\}$.

  To test whether $G$ admits a valid $(\sigma, \tau)$-orthogonal
  representation, we can instead check the existence of a valid
  orthogonal representation of $H$ using thick edges for the graphs
  $G_1, \dots, G_\ell$ (more precisely, we use a $(\sigma_i,
  \tau_i)$-edge for $G_i$).  To ensure that substituting the thick
  edges with the subgraphs yields the desired orthogonal
  representation, we have to enforce the following properties for the
  orthogonal representation of $H$.  First, the orthogonal
  representation of $H$ has to occupy $\sigma$ and $\tau$ incidences
  at its poles.  Second, the thick edge corresponding to a subgraph
  $G_i$ is allowed to have $\beta_i$ bends only if $G_i$ has a valid
  $(\sigma_i, \tau_i)$-orthogonal representation with $\beta_i$ bends.
  Note that this tests the existence of an orthogonal representation
  without restriction to the number of bends.  We will show later, how
  to really compute the cost function of $G$.

  Restricting the allowed flows in the flow network such that they
  only represent $(\sigma, \tau)$-orthogonal representations is easy.
  The graph $H$ occupies $\sigma$ incidences if and only if $\rot(s_f)
  = \sigma - 3$ (where $f$ is the outer face).  As the rotation
  $\rot(s_f)$ is represented by the flow on the corresponding
  vertex-face arc, we can enforce $\rot(s_f) = \sigma - 3$ by setting
  the upper and lower bound on the corresponding arc to $\sigma - 3$.
  Analogously, we can ensure that $H$ occupies $\tau$ incidences of
  $t$.

  In the following we show how to restrict the number of bends of a
  thick edge $e_i = s_it_i$ to the possible number of bends of
  the subgraph $G_i$ it represents.  Assume $G_i$ is $k_i$-critical.
  It follows from Lemma~\ref{lem:small-gap} that $G_i$ has gap at most
  $k_i$.  Thus, the possible values for $\rot(\pi(s_i, t_i))$ can be
  expressed as the union of at most $k_i + 1$ intervals due to
  Lemma~\ref{lem:small-gap-rot}.  Restricting the rotation to an
  interval can be easily done using capacities.  However, we get $k_i
  + 1$ possibilities to set these capacities, and thus combining these
  possibilities for all thick edges results in $\prod (k_i + 1)$ flow
  networks.

  We show that $\prod (k_i + 1)$ is in $O(2^k)$.  To this end, we
  first show that $\sum k_i \le k$ holds, by proving that an edge
  that is critical in one of the subgraphs $G_i$ is still critical in
  the graph $G$.  This is obviously true for critical edges in $G_i$
  not incident to a pole of $G_i$, as these inflexible edges already
  have endpoints with degree~4 in $G_i$.  An edge $e$ incident to a
  pole, without loss of generality $s_i$ of $G_i$ is critical in $G_i$
  if $s_i$ has degree at least~2.  If $s_i$ remains a pole of $G$,
  then $e$ is also critical with respect to $G$.  Otherwise, $s_i$ has
  degree~4 in $G$, which comes from the fact that the skeleton $H$
  becomes triconnected when adding the edge $st$.

  As the $0$-critical subgraphs do not play a role in the product
  $\prod(k_i + 1)$, we only consider the $d$ subgraphs $G_1, \dots,
  G_d$ such that $G_i$ (for $i \in \{1, \dots, d\}$) is $k_i$-critical
  with $k_i \ge 1$.  To find the worst case, we want to maximize
  $\prod(k_i + 1)$ with respect to $\sum k_i \le k$ (which is
  equivalent to finding a hypercuboid of dimension $d$ with maximal
  volume and with fixed perimeter).  We get the maximum by setting
  $k_i = k/d$ for all subgraphs, which results in $(k/d + 1)^d$
  combinations.  Substituting $k/d = x$ leads to $x^{k/x}$, which
  becomes maximal, when $x^{1/x}$ is maximal.  Since $f(x) = x^{1/x}$
  is a decreasing function, we get the worst case for $x = 1$ (when
  restricting $x$ to positive integers), which corresponds to $d = k$
  graphs that are 1-critical.  Thus, in the worst case, we get
  $O(2^k)$ different combinations.

  Since the flow networks have size $O(\ell)$, we can test the
  existence of a valid orthogonal representation of $G$ in $O(2^k\cdot
  T_{\flow}(\ell))$ time.  However, we want to compute the cost
  function instead.  Assume we want to test the existence of a valid
  orthogonal representation with a fixed number of bends $\beta$.  In
  the following, we show how to restrict each of the flow networks to
  allow only flows corresponding to orthogonal representation with
  $\beta$ bends.  Then $G$ clearly admits a valid orthogonal
  representation with $\beta$ bends if and only if one of these flow
  networks admits a valid flow.  The orthogonal representation of $H$
  (and thus the resulting one of $G$) has $\beta$ bends if either
  $-\rot(\pi(s, t)) = \beta$ or $-\rot(\pi(t, s)) = \beta$.  We can
  consider these two cases separately, resulting in a constant factor
  in the running time.  Thus, it remains to ensure that $-\rot(\pi(s,
  t))$ is fixed to $\beta$.  This can be done by splitting the face
  node corresponding to the outer face such that exactly the arcs
  entering $f$ from edge nodes or vertex nodes corresponding to edges
  and internal vertices of $\pi(s, t)$ are incident to one of the
  resulting nodes.  Restricting the flow between the two resulting
  nodes representing the outer face $f$ to $\beta$ obviously enforces
  that $-\rot(\pi(s, t)) = \beta$ holds.  Thus, we could get the cost
  function of $G$ by doing this for all possible values of $\beta$.
  However, we can get the cost function more efficiently.

  Instead of fixing the value of $-\rot(\pi(s, t))$ to $\beta$, we can
  compute maximum flows to minimize or maximize it.  Let $\rot_{\min}$
  and $\rot_{\max}$ be the resulting minimum and maximum for
  $-\rot(\pi(s, t))$, respectively.  Note that, if $\rot_{\max}$ is
  less than $\beta_\low$, then there is no orthogonal representation
  where the number of bends are determined by the rotation along
  $\pi(s, t)$.  Moreover, if $\rot_{\min} < \beta_\low$, we set
  $\rot_{\min} = \beta_\low$.  It follows from basic flow theory that
  all values between $\rot_{\min}$ and $\rot_{\max}$ are also
  possible.  Thus, after computing the two flows, we can simply set
  the cost function of $G$ to~0 on that interval.  To save a factor of
  $k$ in the running time we do not update the cost function of $G$
  immediately, but store the interval $[\rot_{\max}, \rot_{\min}]$.
  In the end, we have $O(2^k)$ such intervals.  The maximum of all
  upper bounds of these intervals is clearly $\beta_{\max}$ (the
  largest possible number of bends of $G$).  It remains to extract the
  cost function of $G$ on the interval $[\beta_\low, \beta_\low +
  k-1]$, since the cost function of $G$ has gap at most~$k$
  (Lemma~\ref{lem:small-gap}).  This can be done by sorting all
  intervals having their lower bound in $[\beta_\low, \beta_\low + k -
  1]$ by their lower bound.  This can be done in $O(k + 2^k)$ time,
  since we sort $O(2^k)$ values in a range of size $k$.  Finally, the
  cost function on $[\beta_\low, \beta_\low + k - 1]$ can be easily
  computed in $O(k + 2^k)$ time by scanning over this list.  As this
  is dominated by the computation of all flows, we get an overall
  running time of $O(2^k\cdot T_{\flow}(\ell))$.
\end{proof}

\begin{lemma}
  \label{lem:sp-composition-fpt}
  The (restricted) cost functions of a series and a parallel
  composition can be computed in $O(k^2 + 1)$ time if the resulting
  graph is $k$-critical.
\end{lemma}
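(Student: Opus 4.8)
The plan is to mimic the approach of Lemmas~\ref{lem:general-series-composition} and~\ref{lem:general-parallel-composition}, but to exploit the structural insights of Section~\ref{sec:cost-functions-k} to replace the ``$O(\ell)$ possible bend values'' bound by an $O(k+1)$ bound coming from the gap. Concretely, I would first observe that by Lemma~\ref{lem:small-gap} the resulting $k$-critical graph $G$ has $(\sigma,\tau)$-gap at most $k$ (at most $k+1$ in the $\sigma=\tau=3$ case), so its $(\sigma,\tau)$-cost function is fully determined by its values on the ``gap interval'' $[\beta_\low,\beta_\low+k]$ together with the single number $\beta_{\max}$ (the cost is $0$ for every bend value in $[\beta_\low+k,\beta_{\max}]$ and $\infty$ above $\beta_{\max}$, and on the gap interval it is an arbitrary $\{0,\infty\}$-pattern). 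Hence it suffices to compute, for each of the constantly many valid $(\sigma,\tau)$-combinations, the cost on the $O(k)$ bend values in the gap interval plus the value $\beta_{\max}$.

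For the series composition, I would reuse the enumeration of parameters from the proof of Lemma~\ref{lem:general-series-composition}: the split $\tau_1=\sigma_2$, the choice of which of $\pi(s_i,t_i)$ or $\pi(t_i,s_i)$ determines $\beta_i$, the rotation at the shared vertex $v$, and the bend numbers $\beta_1,\beta_2$ of the two parts. The only change is the range of $\beta_1$ and $\beta_2$. Each subgraph $G_i$ is $k_i$-critical with $k_1+k_2\le k$ (an inflexible edge critical in a part stays critical in $G$, exactly as argued in the proof of Lemma~\ref{lem:r-composition-fpt}), so each $\cost^{\sigma_i}_{\tau_i}$ is described by its gap interval of length $O(k_i)$ plus $\beta_{\max,i}$. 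When the target bend number $\beta$ of $G$ lies in $G$'s gap interval (the only values we need), I would argue that at least one of $\beta_1,\beta_2$ must lie in the corresponding part's gap interval --- if both parts contributed a ``free'' bend value from their plateau $[\beta_{\low,i}+k_i,\beta_{\max,i}]$, then $G$ would admit valid representations for a whole interval of bend values ending at $\beta$, contradicting $\beta$ being below the start of $G$'s plateau. Thus for fixed parameters I only need to iterate over the $O(k_i+1)$ relevant values of the part that sits in its gap, and the other part's bend number is then forced; looking up its cost is $O(1)$ via the stored representation. Summing over the constantly many parameter choices and over the $O(k)$ target values $\beta$ gives $O(k^2+1)$; computing $\beta_{\max}$ of $G$ is a single additional $O(1)$ combination (max of the two plateaux). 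The restricted case only adds the constraint of a $90^\circ$ angle at $v$ if $\deg(v)=2$, handled exactly as before.

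For the parallel composition I would proceed analogously, using the parameter list from the proof of Lemma~\ref{lem:general-parallel-composition} (the two bend numbers, the $\sigma_i,\tau_i$, the global $\sigma,\tau$, the order of the two graphs, and which of $\pi(s,t),\pi(t,s)$ determines the bends). Here $\beta=\beta_1$ or $\beta=\beta_2$, so once we decide which part realizes $\beta$, that part's bend number is fixed and the other part ranges over the $O(k+1)$ values of its gap interval (plus we must check the forced part actually admits $\beta$ bends, again an $O(1)$ lookup). Over the constantly many parameter choices and $O(k)$ values of $\beta$ this is $O(k^2+1)$ as well --- in fact one can even shave to $O(k)$, but $O(k^2+1)$ suffices. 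The main obstacle I anticipate is the careful bookkeeping in the series case: verifying that when $\beta$ is in $G$'s gap at least one part is forced into its own gap interval (so that we never need to enumerate both parts over $\Theta(n)$ plateau values), and handling the subtlety that $\sigma+\tau$ may equal $6$ for $G$ even though each part has $\sigma_i+\tau_i\le 5$, so one must invoke the weaker $(3,3)$-bound (gap $\le k+1$) of Lemma~\ref{lem:small-gap} in that case; this changes constants but not the asymptotics. Everything else is a routine adaptation of the two earlier composition lemmas together with the gap machinery already established.
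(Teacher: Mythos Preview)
Your approach differs from the paper's and the central step you flag as ``the main obstacle'' is in fact a genuine gap. The claim that, whenever $\beta$ lies in $G$'s gap interval, at least one of $\beta_1,\beta_2$ must lie in its own gap interval, is not justified by the argument you sketch. You argue that if both $\beta_i$ sit in their plateaus then $G$ admits valid representations for ``a whole interval of bend values ending at $\beta$''; but to conclude that $\beta$ is not in $G$'s gap you would need the interval to extend \emph{upwards} to $\beta_{\max}$ (or at least to the start of $G$'s plateau), not downwards. Varying $\beta_1,\beta_2$ inside their plateaus only yields some interval $[\beta_a,\beta_b]$ of achievable values for $G$; nothing prevents $\beta_b$ from lying strictly below the start of $G$'s plateau (there may be a non-achievable value between $\beta_b$ and $\beta^*$ coming from a different choice of the constant parameters), so $\beta\in[\beta_a,\beta_b]$ can perfectly well sit inside $G$'s gap. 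Your enumeration would then miss such a $\beta$ and wrongly declare it infeasible. A related issue is that $\beta_{\max}$ of a series composition is not ``a single $O(1)$ combination of the two plateaus''; it depends on signed rotations and parameter choices and is itself an optimization over the interval structure.

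The paper avoids this difficulty entirely by a different route: it does not enumerate bend numbers $\beta_i$ at all. Instead it works with the rotation values $\rot(\pi(s_i,t_i))$, invokes Lemma~\ref{lem:small-gap-rot} to write the admissible rotations for each $G_i$ as a union of $O(k_i+1)\subseteq O(k+1)$ intervals, and then, for each of the $O((k+1)^2)=O(k^2+1)$ pairs of intervals, solves a constant-size flow network (the Tamassia-style network on the two-edge skeleton) with the rotation of each virtual edge constrained to the chosen interval. This handles the ``both parts in their plateau'' case automatically, because the plateau is simply one of the intervals, and the flow computation for that interval-pair returns the full range of achievable $\beta$ directly. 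Extracting the cost function from the resulting $O(k^2+1)$ ranges is then done exactly as in the proof of Lemma~\ref{lem:r-composition-fpt}. If you want to rescue your enumeration approach, you must add the both-in-plateau case as a separate $O(1)$ contribution per parameter setting (it yields an interval of achievable $\beta$), rather than argue it away.
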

\begin{proof}
  First, consider only the non-restricted case.  Let $G_1$ and $G_2$
  be the two graphs that should be composed and let $G$ be the
  resulting graph.  As in the rigid case, we can use flow networks to
  compute the cost functions of $G$.  However, this time the flow
  network has constant size and thus we do not have to be so careful
  with the constants.

  Assume $G_1$ and $G_2$ are $k_1$- and $k_2$-critical, respectively.
  Up to possibly a constant number, all critical edges in $G_i$ are
  also critical in $G$, that is $k_i \in O(k + 1)$ (note that the
  ``$+1$'' is necessary for the case $k = 0$).  Thus, both graphs
  $G_1$ and $G_2$ have a gap of size $O(k + 1)$.  It follows that the
  possible rotations values for $\pi(s_i, t_i)$ (where $s_i$ and $t_i$
  are the poles of $G_i$) are the union of $O(k + 1)$ intervals, which
  results in $O(k^2 + 1)$ possible combinations and thus $O(k^2 + 1)$
  flow networks of constant size.  Note that we get an additional
  constant factor by considering all possible values for the number of
  occupied incidences of the graphs $G_i$.  Extracting the cost
  functions out of the results from the flow computation can be done
  analogously to the case where we had a rigid composition (proof of
  Lemma~\ref{lem:r-composition-fpt}), which finally results in the
  claimed running time $O(k^2 + 1)$.

  To compute the restricted cost functions, one possibly has to
  restrict the rotation at some vertices to $-1$ or $1$, which can be
  obviously done without increasing the running time.
\end{proof}

\begin{theorem}
  {\sc FlexDraw} for $k$-critical graphs can be solved in $O(2^k\cdot
  n \cdot T_{\flow}(n))$
\end{theorem}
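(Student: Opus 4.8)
The plan is to combine the general composition framework of Theorem~\ref{thm:general-algo} with the efficient composition routines just proved, namely Lemma~\ref{lem:r-composition-fpt} for rigid compositions and Lemma~\ref{lem:sp-composition-fpt} for series and parallel compositions. Recall that Theorem~\ref{thm:general-algo} solves {\sc OptimalFlexDraw} (and hence {\sc FlexDraw}, which is a special case) in time $O(n\cdot(nT_S^r + nT_P^r + T_R^r(n)))$, where $T_S^r$, $T_P^r$, and $T_R^r(\ell)$ are the times to compute the restricted cost functions of a series, parallel, and rigid composition (with skeleton of size $\ell$), respectively. So the task reduces to plugging in the bounds $T_S^r, T_P^r \in O(k^2+1)$ and $T_R^r(\ell) \in O(2^k\cdot T_{\flow}(\ell))$ and checking that the product simplifies to the claimed $O(2^k\cdot n\cdot T_{\flow}(n))$.

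First I would observe that every composition arising in the SPQR-based bottom-up traversal produces a subgraph of $G$, and an edge critical in such a subgraph $G_i$ is — up to a constant number of edges incident to the poles — critical in $G$ as well (this is exactly the argument already used in the proofs of Lemmas~\ref{lem:r-composition-fpt} and~\ref{lem:sp-composition-fpt}: a pole of $G_i$ that is internalized acquires degree~4 because adding the virtual parent edge triconnects the relevant skeleton). Hence every intermediate composition is $k'$-critical for some $k' \le k + O(1)$, so Lemma~\ref{lem:r-composition-fpt} gives $T_R^r(\ell) \in O(2^k\cdot T_{\flow}(\ell))$ and Lemma~\ref{lem:sp-composition-fpt} gives $T_S^r, T_P^r \in O(k^2 + 1)$. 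Substituting into Theorem~\ref{thm:general-algo} yields a running time of $O(n\cdot(n(k^2+1) + 2^k\cdot T_{\flow}(n)))$. Since $k \le n$ implies $k^2 + 1 \in O(2^k \cdot T_{\flow}(n)/1)$ only loosely, I would instead bound $n(k^2+1)$ against $2^k\cdot T_{\flow}(n)$ directly: as $T_{\flow}(n) = \Omega(n)$ and $2^k \ge k^2+1$ for all $k \ge 0$, the $2^k\cdot T_{\flow}(n)$ term dominates, so the whole expression collapses to $O(2^k\cdot n\cdot T_{\flow}(n))$.

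I would then note two bookkeeping points. The restricted cost functions of a Q-node are given directly by the input flexibility (cost~$0$ on $[0,\flex(e)]$, else $\infty$), so leaves cost nothing. And because we only ever need cost functions on the interval $[\beta_\low, \beta_\low + k - 1]$ together with the tail interval $[\beta_\low+k, \beta_{\max}]$ on which the cost is identically~$0$ (Lemma~\ref{lem:small-gap}), the cost functions have a description of size $O(k+1)$, so passing them between composition steps does not inflate the running time — this is what makes the per-composition bounds $O(k^2+1)$ and $O(2^k T_{\flow}(\ell))$ legitimate rather than hiding a factor depending on the number of possible bend values.

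The main obstacle is not really difficulty but making sure the arithmetic of Theorem~\ref{thm:general-algo} is invoked correctly: one must be careful that $T_R^r(n)$ appears only once (not multiplied by $n$) in that bound, since the total size of all R-node skeletons is linear — so the $2^k$ factor multiplies $T_{\flow}(n)$ only once per choice of pole pair, giving the outer factor $n$ from ranging over pole pairs and the bottom-up traversal. A secondary subtlety worth a sentence is that {\sc FlexDraw} is the decision version: we simply run the {\sc OptimalFlexDraw} algorithm with the $\{0,\infty\}$ cost functions induced by $\flex$ and answer ``yes'' iff the resulting optimal cost is finite, and this also certifies membership in NP trivially, so together with the earlier hardness section this pins down the complexity of $k$-critical instances.
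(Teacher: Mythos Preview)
Your proposal is correct and follows essentially the same route as the paper: invoke Theorem~\ref{thm:general-algo}, plug in $T_S^r, T_P^r \in O(k^2+1)$ from Lemma~\ref{lem:sp-composition-fpt} and $T_R^r(\ell) \in O(2^k \cdot T_{\flow}(\ell))$ from Lemma~\ref{lem:r-composition-fpt}, and simplify. One small slip: the inequality $2^k \ge k^2+1$ fails for $k \in \{2,3,4\}$, but this does not matter for the big-$O$ claim since $k^2+1 \in O(2^k)$ asymptotically (and the paper itself just says the simplification is ``obvious'').
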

\begin{proof}
  By Theorem~\ref{thm:general-algo}, we get an algorithm with the
  running time $O(n \cdot (n \cdot T_S + n\cdot T_P + T_R(n)))$, where
  $T_S, T_P \in O(k^2 + 1)$ (Lemma~\ref{lem:sp-composition-fpt}) and
  $T_R(\ell) = 2^k\cdot T_{\flow}(\ell)$
  (Lemma~\ref{lem:r-composition-fpt}) holds.  This obviously yields
  the running time $O((k^2+1)\cdot n^2 + 2^k\cdot n \cdot
  T_{\flow}(n)) = O(2^k\cdot n \cdot T_{\flow}(n))$.
\end{proof}

\section{Conclusion}
\label{sec:conclusion}

We want to conclude with the open question whether there exists an
FPT-algorithm for {\sc OptimalFlexDraw} for the case where all cost
functions are convex and where the first bend causes cost only for $k$
edges (that is we have $k$ inflexible edges).  One might think that
this works similar as for {\sc FlexDraw} by showing that the cost
functions of $st$-graphs are only non-convex if they contain
inflexible edges.  Then, when encountering a rigid composition, one
could separate these non-convex cost functions into convex parts and
consider all combinations of these convex parts.  Unfortunately, the
cost functions of $st$-graphs may already be non-convex, even though
they do not contain inflexible edges.  The reason why {\sc
  OptimalFlexDraw} can still be solved efficiently if there are no
inflexible edges~\cite{brw-oogd-13} is that, in this case, the cost
functions need to be considered only up to three bends (and for this
restricted intervals, the cost functions are convex).  However, a
single subgraph with inflexible edges in a rigid composition may force
arbitrary other subgraphs in this composition to have more than three
bends, potentially resulting in linearly many non-convex cost
functions that have to be considered.  Thus, although the algorithms
for {\sc FlexDraw} and {\sc OptimalFlexDraw} are very similar, the
latter does not seem to allow even a small number of inflexible edges.

\medskip
\noindent\textbf{Acknowledgement.\,\,} We thank Marcus Krug for
discussions on \textsc{FlexDraw}.

\bibliographystyle{splncs03}
\bibliography{strings,all}
% \bibliography{inflexible-flexdraw}

\end{document}